\documentclass[10pt,conference,compsocconf,letterpaper]{IEEEtran}
\usepackage{times}
\usepackage[english]{babel}
\usepackage{amsmath,amssymb,bbm}
\usepackage{todonotes}
 \usepackage[linesnumbered,ruled,vlined]{algorithm2e}
 \usepackage{amsthm}
      \usepackage{graphicx}
      \usepackage{subfigure}
      \usepackage{epstopdf}
      \usepackage{cite}

\newtheorem{Theorem}{Theorem}
\newcommand{\nin}{\not\in}
\newcommand{\nobracket}{}
\newcommand{\tmfolded}[2]{\trivlist{\item[$\bullet$]\mbox{}#1}}
\newcommand{\tmop}[1]{\ensuremath{\operatorname{#1}}}

\newtheorem{prop}{Proposition}
\newtheorem{lemma}{Lemma}
\newtheorem{definition}{Definition}
\newtheorem{remark}{Remark}


%
\ifCLASSINFOpdf
\else
\fi

\hyphenation{op-tical net-works semi-conduc-tor}

\begin{document}
%
\title{Task-Cloning Algorithms in a MapReduce Cluster with Competitive Performance Bounds}

\author{Huanle XU, Wing Cheong LAU \\ Department of Information Engineering, The Chinese University of Hong Kong\\
\{xh112, wclau\}@ie.cuhk.edu.hk \\
}

\maketitle


\begin{abstract}
Job scheduling for a MapReduce cluster has been an active research topic in recent years. However, measurement traces from  real-world production environment show that the duration of tasks within a job vary widely.
The overall elapsed time of a job, i.e. the so-called flowtime,  is often dictated by one or few slowly-running tasks within a job, generally referred as the  ``stragglers''.  The cause of stragglers include tasks running on partially/intermittently failing machines or the existence of some  localized resource bottleneck(s) within a MapReduce cluster.
To tackle this online job scheduling challenge, we adopt the task cloning approach and design the corresponding scheduling algorithms which aim at minimizing the weighted sum of job flowtimes in a MapReduce cluster based on the Shortest Remaining Processing Time scheduler (SRPT). To be more specific, we first design a 2-competitive offline algorithm
when the variance of task-duration is negligible. We then extend  this offline algorithm to yield the so-called  SRPTMS+C algorithm for the online case and show that SRPTMS+C is $(1+\epsilon)-speed$ $o(\frac{1}{\epsilon^2})-competitive$ in reducing the weighted sum of job flowtimes within a cluster. Both of the algorithms explicitly consider the precedence constraints between the two phases within the MapReduce framework.
We also demonstrate via trace-driven simulations  that SRPTMS+C can significantly reduce the weighted/unweighted sum of job flowtimes by cutting down the elapsed time of small jobs substantially.
In particular, SRPTMS+C beats the Microsoft Mantri scheme by nearly 25\% according to this metric.
\end{abstract}

\begin{keywords}
MapReduce, job Scheduling, SRPT, cloning, weighted job flowtime, competitive bound
\end{keywords}

\section{Introduction}
\label{Introduction}
MapReduce \cite{mapreduce:google} and its open-source realization via Hadoop \cite{hadoop} have emerged as the defacto framework  to support large-scale parallel/distributed processing and data analytics.
Under the  MapReduce framework, the overall computation of a job  is decomposed into 2 separate phases, namely, the Map phase and the Reduce phase. Within each phase, many relatively small tasks are executed in parallel across a large number of  machines within the MapReduce cluster. The MapReduce computational model also requires that the Reduce phase of a job cannot begin until all the tasks within its Map phase have been completed.  A key feature of catalyzing the widespread adoption of MapReduce framework is the ability to transparently deal with the challenges
of executing these tasks in a distributed setting.  One of such fundamental challenges is the disproportionately long-running tasks, or the so called stragglers, which corresponding to tasks that are unfortunately assigned to machines suffering from partially/intermittently failures or localized resource bottleneck(s).  Measurement traces from the real-world production environment   \cite{Outliers} indicate that stragglers lead to a large variation in completion times among tasks in the same job phase and delay job completion 
substantially.

The dominant technique to mitigate the straggler problem is via speculative execution: a strategy which preventively or
reactively handle stragglers via automatically launching of extra copies of a task on alternative machines.
In particular, there are two main classes
of speculative execution strategies proposed in the literature, namely, the Cloning approach  \cite{Cloning} and the Straggler-Detection-based one  \cite{hadoop,Dryad,Performance,Outliers,Smart_Speculative}.
Under the Cloning approach, extra copies of
a task are scheduled in parallel with the initial task and the one which finishes first is used for the subsequent computation. For the Straggler-Detection based
approach, the progress of each task is monitored by the
system and backup copies are launched when a straggler is
detected. Unfortunately,
most of these speculative execution schemes are based
on simple heuristics and generally lack any performance guarantee.


To take a more systematic
approach for the design of speculative execution strategies,
our previous work (e.g., \cite{speculative-single,speculative-single-optimization,speculative-multiple-optimization}) propose several optimization-based schemes:
 \cite{speculative-multiple-optimization} proposes to make clones for each task of the arriving jobs by running a convex program which aims at minimizing
the total job elapsed time, which is the time-span between the job arrival and its completion. This is commonly referred as the flowtime of a job in the scheduling literature. However, \cite{speculative-single,speculative-single-optimization,speculative-multiple-optimization} still
 face two fundamental limitations.   Firstly, the precedence constraints between the two phases in the MapReduce framework are ignored.
 Secondly, the complete distribution of task duration within each job needed to be known in advance when solving the optimization problem. 
 Ideally, we want to take the precedence
 constraint into consideration and reduce the amount of information required for optimizing the speculative execution scheme. 

With the above ideas in mind, in this paper, we explicitly model the precedence between the Map and Reduce phase and assume that only the first and second moments of task duration are known \textit{a priori}. Similar to \cite{speculative-multiple-optimization}, we aim to minimize the weighted sum of  job flowtime via task cloning. This objective yields 
offline and online versions of the scheduling problem which turns out to be more difficult than the NP-Hard scheduling problem presented in \cite{Schedulers}. Our main results include the approximated algorithms which are motivated from the Shortest Remaining Processing Time scheduler (SRPT) in both offline and online setting. To be more specific, we obtain a 2-competitive algorithm for the offline case
when the variance of task-duration is negligible and a $(1+\epsilon)-speed$ $o(\frac{1}{\epsilon^2})-competitive$ algorithm for the online case where $0 < \epsilon < 1$. For the online version of the algorithm, we assume resource augmentation \cite{speed}, which is necessary to circumvent lower bounds for the parallel scheduling on multiple machines. Under the resource augmentation analysis, the adversary is given $m$ unit-speed machines and our algorithm is given $M$ processors of speed $s$ where $s > 1$.
To summarize, this paper has made the following technical contributions:

\begin{itemize}
\item After reviewing the related work in Section \ref{related_work}, we cast the dynamic scheduling problem as an stochastic optimization problem that focuses on finding a cloning scheme to minimize the weighted sum of job flowtimes (Section \ref{system_model}).
\item Motivated by the SRPT scheduler, we design a 2-competitive algorithm for the offline case when the variance of task duration is negligible. Moreover, we show that,
with high probability, each job can complete within a time-span which is larger than the optimal algorithm by only a constant factor times the standard derivation of task duration (Section \ref{bulk_arrival}).
\item Extended from the offline algorithm, we design the so-called SRPTMS+C algorithm for the online case.
By adopting the method of potential function analysis,
we prove that SRPTMS+C is $(1+\epsilon)-speed$ $o(\frac{1}{\epsilon^2})-competitive$ for the weighted sum of job
flowtimes when $0< \epsilon < 1$ (Section \ref{online-scheduling}).
\item Before concluding our work in Section \ref{conclusion},  we demonstrate via trace-driven simulations that SRPTMS+C can significantly reduce the weighted average of job flowtimes by cutting down the elapsed time of small jobs substantially. In particular, SRPTMS+C beats the Microsoft Mantri scheme by nearly 25\% according to this metric (Section \ref{evaluation}). 
\end{itemize}

\section{Related work}
\label{related_work}
The straggler problem was first identified in the original MapReduce paper \cite{mapreduce:google}. Since then, various solutions have been proposed to deal with it using the
Straggler-Detection-based  speculative execution strategy \cite{Dryad,Performance,Outliers,Smart_Speculative}. These solutions mainly focus on promptly identifying
 stragglers and accurately predicting the performance of running tasks.  One fundamental limitation is that detection may be too late for helping small jobs as it needs to wait for the collection of enough samples while monitoring the progress of tasks. To avoid the extra delay caused by the straggler detection, cloning approach was proposed in \cite{Cloning}. This approach
 relies on cloning very small job in a greedy manner to mitigate the straggler-effect and is based on simple heuristics. In contrast, we develop an optimization framework to make clones for each arriving job. Recently, \cite{grass} presents GRASS, which carefully adopts the Detection-based approach to trim stragglers for approximation jobs. GRASS also provides a unified solution for normal jobs. However, one limitation is that it only prioritizes the tasks within a job and it remains a problem to prioritize different jobs (i.e., the scheduler is not optimized and unknown to the readers).

Prior research on job scheduling for a MapReduce system includes \cite{Fast_completion,Flow_Shops,Joint_Phase,Delay_Tails,Joing_scheduling,overlapping_phases,Schedulers}:\cite{Fast_completion,Joing_scheduling,Joint_Phase}
derive performance bounds for minimizing the total completion time. \cite{Delay_Tails} designs the \textit{Coupling scheduler}, which mitigates the starvation problem caused by Reduce tasks in large jobs. \cite{Flow_Shops,Schedulers,Joing_scheduling} extend the SRPT scheduler to minimize the total job flowtime under different settings. However, all of these studies assume accurate knowledge of task durations and hence do not support speculative copies to be scheduled dynamically.

Finally, the SRPT scheduler has been studied extensively in traditional parallel scheduling literature.
In particular, SRPT has proven to be $(1+\epsilon)-speed$ $\frac{4}{\epsilon}-competitive$ for total flowtime on $m$ identical machines under the single task case \cite{SRPT}.
In this paper, we extend the SRPT scheduler to yield an online scheduler which can mitigate stragglers as well.

\section{System Model and Problem Formulation}
\label{system_model}
Consider a MapReduce Cluster which consists of M machines. A machine could represent a processor, a core or a virtual machine. Assume a set of jobs  $\mathcal{J} = \{J_1, J_2, \cdots\}$ entering into the cluster over time. Job $J_i \in \mathcal{J}$ which arrives at the cluster at time $a_i$ consists of $m_i$ map tasks and $r_i$ reduce tasks. Each job has a weight $w_i$ which reflects its priority. Let $J_i^m = \{\delta_i^{m,1},\delta_i^{m,2},\cdots,\delta_i^{m,m_i}\}$ and $J_i^r = \{\delta_i^{r,1},\delta_i^{r,2},\cdots,\delta_i^{r,r_i}\}$ be the set of map and reduce tasks of $J_i$ respectively. Each machine can only hold one map or reduce task at any time and all the machines are identical.

As described in Section \ref{Introduction}, the large variation in task completion times is caused by machine failures or localized resource bottleneck(s). Instead of modelling the variance of machine speed directly, we consider that the variation is caused by task workload and each machine processes all the tasks with the same speed. Such transformation does not violate the variation in task completion times and simply our analysis.

We assume time is slotted and a centralized scheduler collects the status of jobs within the cluster at the beginning of each time slot. If a machine runs a task at speed $s$, it will take $p(\cdot)/s$ time slots to complete the task where $p(\cdot)$ denotes the workload of this task.
 Without loss of generality, we assume that  all the machines run at unit speed.

For ease of presentation, throughout the whole paper, we use $c\in \{m,r\}$ to capture the map- or reduce-related statements for all the tasks, i.e., when $c$ is used, it is fixed to either $m$ or $r$. The workload of task $\delta_i^{c,j} \in J_i^c$ is $p_i^{c,j}$ where $p_i^{c,j}$ is a random number for all $i,j$. Under the unit speed case, $p_i^{c,j}$ also denotes the processing time of task $\delta_i^{c,j}$ on a particular machine. We also assume the workload of all tasks in a job share the same mean $E^c_i$ and standard deviation $\sigma^c_i$. The parameters $E^c_i$ and $\sigma^c_i$ are known in advance to the scheduler for all $i$.

Table \ref{Table_1} summarizes all the notations in this model.

\begin{table}
\centering
\caption{The notations of the scheduling parameters}
\label{Table_1}
\begin{tabular}{|c|c|}
  \hline
  Notations &  Corresponding meaning\\
  \hline
  \hline
  $\mathcal{J}$ & The set of jobs arriving at the cluster\\
  \hline
  $J_i^c$ & The set of map/reduce tasks of job $J_i$ with \\
  & $c=\{map$ for map task; $reduce$ for reduce task\}\\
  \hline
  $a_i$ & Arrival time of job $J_i$\\
  \hline
  $f_i$ & Time when job $J_i$  completes its work \\
  \hline
  $w_{i}$ & Weight of job $J_i$\\
  \hline
  $p_i^{c,j}$ & Workload of the map/reduce task $\delta_i^j$($\rho_i^j$)\\
  \hline
  $E^c_i$ & The mean of the workload for map/reduce task in $J_i$\\
  \hline
  $\sigma^c_i$ & The SD of the workload for map/reduce task in $J_i$\\
  \hline
  M & Total number of machines in the Cluster\\
  \hline
  $s^c_i(x)$  & The speedup function of a map/reduce task in $J_i$\\
  \hline
  $c^{j}_{i}$  & Time when task $\delta^{{c,j}}_{i}$ is scheduled.\\
  \hline
   $t^{c,j}_{i}$  & The duration of task $\delta^{{c,j}}_{i}$.\\
  \hline
  $f^{c,j}_{i}$ & Time when task $\delta^{{c,j}}_{i}$ completes.\\
  \hline
  M(t) & Number of machines running map tasks at time $t$.\\
  \hline
  R(t) & Number of machines running reduce tasks at time $t$.\\
  \hline
  $x^{c,j}_{i}$ & Number of copies made for task $\delta^{{c,j}}_{i}$.\\
  \hline
\end{tabular}
\end{table}

\subsection{Speedup via task cloning}
In this model, we adopt the cloning approach to mitigate the negative 
impact of stragglers. Cloning helps to speed up the completion of a task via picking up the copy which finishes first of this task.
To capture such speedup, we define a function, which is $s^c_i(x)$,  for each phase of every single job where $x$ is the number of copies made for a particular task. For example, it takes $p_i^{c,j}/s^c_i(2)$ time slots to complete task $\delta_i^{c,j}$ on average if two copies are made when scheduling $\delta_i^{c,j}$. Here, we assume that $s^c_i(x)$ satisfies the following two properties:
\begin{itemize}
\item $s^c_i(x)$ is a concave and strictly increasing function of $x$, $\forall i$.
\item $s^c_i(1)=1$ and $s^c_i(x) \leq x$ for all $x > 0$, $\forall i$.
\end{itemize}
These two properties are applicable to most distributions of the task duration observed in practice. For example, 
\cite{Outliers,speculative-multiple-optimization} show that the task duration for a MapReduce cluster follows a heavy-tail distribution.  Below, we illustrate the convexity of 
the speedup function when the task duration follows a Pareto heavy-tail distribution.  In particular, if the duration $p_i^{c,j}$ of task $\delta_i^{c,j}$ follows the following Pareto distribution, we have:
$$ Pr(p_i^{c,j} < t) = \left\{\begin{array}{cc}
1-(\frac{\mu}{t})^{\alpha} & for \ t \geq \mu\\
0 & otherwise
\end{array}\right.$$
when $r$ copies are made for the task $\delta_i^{c,j}$, the average duration of $\delta_i^{c,j}$ is $\frac{\alpha r \mu}{\alpha r - 1}$. The derivation of this result is shown in \cite{speculative-single-optimization}. As such, the speedup function is just $s_i^c(r) = \frac{r\alpha - 1}{r(\alpha-1)} $ which is strictly concave and
monotonic.

\subsection{A stochastic program formulation for job scheduling}
For any job, all the map tasks and reduce tasks can only be scheduled after the job arrival at the cluster and hence $m^{j}_{i} \geq a_{i}$. The Map phase ends when all the map tasks finish, i.e.,  $f_{i}^{m,j} =m^{j}_{i} + t^{m,j}_i \ \  \forall i;1\leq j \leq m_i$. Due to the precedence constraints of the Map and Reduce phase, a reduce task can not begin its work if some map tasks within the same job do not finish. Thus, the reduce task $\delta_i^{r,j}$ can only start after the end of the Map phase (i.e., $\max \{ \max_{k} \{ f_{i}^{m,k} \} ,r^{j}_{i} \}$). At any time slot, the total number of machines available for processing the tasks and their clones cannot exceed $M$, i.e., $M(t) + R(t) \leq M$. Finally, a job completes when all the reduce tasks are finished, i.e.,  $f_{i} = \max_{j} \{ f_{i}^{r,j} \} \ \ \forall i;1\leq j \leq r_i$.

For this model, we aim to minimize the weighted sum of job flowtimes by carefully making cloning decisions and prioritizing different jobs. This formulation yields
an optimization problem shown below:
\begin{subequations}
\label{online_formulation}
\begin{eqnarray}
  \min &  & \sum_{i} w_{i} \cdot \mathbbm{E}[ f_{i} -a_{i} ]\\
  \label{released}
  s.t. &  & m^{j}_{i} \geq a_{i} \quad \forall i;1\leq j \leq m_i\\
  &  & r^{j}_{i} \geq a_{i} \quad \forall i;1\leq j \leq r_i\\
  \label{mapper_processing}
  &  & \mathbbm{E}[t^{m,j}_i] =  {E^{m}_{i}}/ {s^m_i(x^{m,j}_{i})} \quad \forall i;1\leq j \leq m_i\\
  \label{reducer_processing}
  &  & \mathbbm{E}[t^{r,j}_i]  = {E^{r_{}}_{i}}/{s^r_i(x^{r,j}_{i})} \quad \forall i;1\leq j \leq r_i\\
  &  & f_{i}^{m,j} =m^{j}_{i} + t^{m,j}_i \quad \forall i;1\leq j \leq m_i\\
  \label{reduce_begin}
  &  & f_{i}^{r,j}  =  \max \{ \max_{k} \{ f_{i}^{m,k} \} ,r^{j}_{i} \} +  t^{r,j}_i \ \forall i;j\\
  \label{num_map_machine}
  &  & \sum_{{m_{i}^{j} \geq t; \ f_{i}^{m,j} <t}} x^{m,j}_{i} =M(t) \quad \forall t\\
   \label{num_reduce_machine}
  &  & \sum_{{r_{i}^{j} \geq t; \ f_{i}^{r,j} <t}} x^{r,j}_{i} =R(t) \quad \forall t\\
  \label{num_total_machine}
  &  & M( t ) +R( t ) \leq M \quad \forall t\\
  \label{flowtime}
  &  & f_{i} = \max_{j} \{ f_{i}^{r,j} \} \quad \forall i;1\leq j \leq r_i
\end{eqnarray}
\end{subequations}
Constraint (\ref{mapper_processing}) and \eqref{reducer_processing} illustrate the speedup property for the map tasks and reduce tasks respectively.
Constraint \eqref{reduce_begin} is due to the precedence constraint of the Map and Reduce phase.

\begin{remark}
\label{remark_1}
When task cloning is not used and there is no variation in completion times among tasks in the same job phase, the scheduling problem in our model just reduces to the problem in \cite{Schedulers}. However, the optimization problem presented in \cite{Schedulers} has proven to be NP-Hard even for the offline case where all the jobs enter into the cluster at the same time. The stochastic optimization problem in Equation \eqref{online_formulation} therefore is NP-Hard and hence  we resort to the use of approximation algorithms to tackle this problem. 
\end{remark}

\section{Offline Scheduling: All the jobs arrive at the cluster at the same time}
\label{bulk_arrival}
Before designing the online algorithm, in this section, we consider an offline case where all the jobs enter into the system at the same time, namely, $a_i = 0 \ \forall i$.
We assume that all the tasks cannot be launched simultaneously in the cluster (e.g., $\sum_{i=1}c_i > M$), otherwise, we can assign all the tasks to the machines in the cluster and the scheduling process just ends. Although this setting is simple, the offline algorithm presented below provides good insights for us to design an online algorithm in the following sections.

\cite{grass} builds a simple model to analyze the advantage of pure cloning. It concludes that cloning cannot help to reduce the job flowtime if $s^c_i(x) \leq x$ when the number of tasks to be scheduled is larger than $M$. Therefore, we do not clone extra copies for the tasks in this bulk arrival scenario.


\subsection{Offline Algorithm Design}
It is well known in scheduling literature that the SRPT scheduler is optimal for reducing overall flowtime on a single machine when there is only one task per job \cite{Scheduling_book}. In each time slot, the SRPT scheduler always selects the job with the minimum total remaining workload to serve.  We extend this SRPT scheduler to design the following offline scheduling algorithm:

In this algorithm, the scheduler first applies the SRPT scheduler for the scheduling problem in which there is only one single machine to determine the priority of each job.
Let $\phi_i$ be the total effective workload of job $J_i$, which is determined by the following equation:
\begin{equation}
\label{workload}
\phi_i = m_i\cdot(E^m_i + r\sigma^m_i) + r_i\cdot(E^r_i + r\sigma^r_i)
\end{equation}
The standard deviation of task duration is incorporated into the workload via multiplying by a factor $r$ and the priority of job $J_i$ is then defined as $w_i/\phi_i$. The rationale of including the standard derivation of task duration in the effective workload of task is that tasks with large variation in completion times can easily prolong the job completion and hence should be scheduled later. However, it still
remains a problem to choose a good $r$ and we tackle this problem in Section \ref{evaluation}.

After computing the priority for each job,  the jobs with higher priorities are always scheduled before those ones with lower priorities. Whenever a machine is available, the scheduler randomly chooses one unscheduled task from the pool of not-yet-finished jobs, or the set of alive jobs that keep the highest priority and assign it to this machine. Moreover, all the map tasks are scheduled before the reduce tasks in the same job.
Since the Reduce phase can only begin after the Map phase finishes, a reduce task cannot make progress even after it has been scheduled as long as there are some unfinished map tasks within the same job.

\IncMargin{1em}
\begin{algorithm}
\label{offline_algorithm}
\caption{Offline Scheduling algorithm for the bulk arrival}
\Indm
\KwIn{The jobs associated with $c_i$, $E^c_i$ and $\sigma^c_i$;}
\KwOut{Allocated machines for all the map tasks and reduce tasks.}
\Indp
Sort the job set $\mathcal{J}$ based on the decreasing order of $w_i/\phi_i$ \;
Initialize the job set $\Phi = \mathcal{J}$\;
\If{A machine is  available}
{
\For{each job $J_i$ in $\Phi$ }
{
\eIf{$J_i$ has unscheduled map task}{
Choose one unscheduled map task at random and assign it to this machine\;
}
{
Choose one unscheduled reduce task at random and assign it to this machine\;
}
\If{$J_i$ has no unscheduled task}
{
$\Phi = \Phi - \{J_i\}$\;
}
}
}
\end{algorithm}
\DecMargin{1em}

Algorithm \ref{offline_algorithm} presents the pseudo-code of the algorithm.

\subsection{Deriving the upper bound for job flowtime}
We proceed to analyze the performance of Algorithm \ref{offline_algorithm}. Define $f^s_i$ as the accumulated workload of those jobs whose priority is larger than job $J_i$. In other words,
\begin{equation}
f_i^s = \sum_{j:w_j/\phi_j \geq w_i/\phi_i} \phi_j
\end{equation}
We aim to show a generic bound on the flowtime of each job with a certain probability. To achieve this goal, we first prove the following lemma:
\begin{lemma}
\label{Lemma_1}
With probability at least $\frac{r^2-1}{r^2}$, the cluster is processing the jobs with priority at least $w_i/\phi_i$ during the interval $[0,f_i-E^r_i - r\sigma^r_i].$
\end{lemma}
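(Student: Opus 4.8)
The plan is to combine two ingredients: a deterministic scheduling argument that pins down exactly when the cluster must be fully occupied by high-priority work, and a single application of Chebyshev's inequality that supplies the probability $\frac{r^2-1}{r^2}$. I would fix the job $J_i$ and let $\delta_i^{r,j^*}$ denote the reduce task of $J_i$ that finishes last, so that $f_i = f_i^{r,j^*}$. Writing its effective start time (the larger of the end of the Map phase and its own scheduling instant, as in constraint (\ref{reduce_begin})) as $s^*$, we have $f_i = s^* + t_i^{r,j^*}$. The deterministic claim I want is that throughout $[0,s^*)$ every machine is running a task of some job whose priority is at least $w_i/\phi_i$. This should follow from the structure of Algorithm \ref{offline_algorithm}: since all jobs are released at time $0$ and the scheduler always fills an available machine from the highest-priority job that still owns an unscheduled task, a machine can be idle or devoted to a job of priority strictly below $w_i/\phi_i$ only after $J_i$ and every higher-priority job has run out of unscheduled tasks. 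Because $\delta_i^{r,j^*}$ has not begun before $s^*$, the job $J_i$ still carries pending work on $[0,s^*)$, which rules out both idling and the servicing of lower-priority jobs on that interval.

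Next I would translate the target interval into a condition on a single task duration. The interval in the statement, $[0, f_i - E_i^r - r\sigma_i^r]$, is contained in $[0,s^*)$ precisely when $f_i - E_i^r - r\sigma_i^r \leq s^* = f_i - t_i^{r,j^*}$, i.e. when $t_i^{r,j^*} \leq E_i^r + r\sigma_i^r$. Because no cloning is used in the bulk-arrival case, $t_i^{r,j^*}$ has mean $E_i^r$ and standard deviation $\sigma_i^r$, so Chebyshev's inequality gives $\Pr\bigl(t_i^{r,j^*} > E_i^r + r\sigma_i^r\bigr) \leq \Pr\bigl(|t_i^{r,j^*} - E_i^r| \geq r\sigma_i^r\bigr) \leq 1/r^2$. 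Hence with probability at least $1 - 1/r^2 = \frac{r^2-1}{r^2}$ the inclusion holds, and the deterministic claim then forces the cluster to process only jobs of priority at least $w_i/\phi_i$ on the whole interval $[0, f_i - E_i^r - r\sigma_i^r]$, which is exactly the assertion of the lemma.

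The step I expect to be the main obstacle is making the deterministic claim fully rigorous in the presence of the Map/Reduce precedence and the random identity of $\delta_i^{r,j^*}$. On the scheduling side I must handle the case in which a reduce task of $J_i$ has already been assigned to a machine but is stalled waiting for the Map phase, so that ``$J_i$ has an unscheduled task'' and ``$J_i$ is not done'' can come apart and leave some machines free for lower-priority jobs; the cleanest fix is to track the last instant at which $J_i$ owns a not-yet-assigned task, show that the Map phase has completed by then, and verify that this instant dominates $s^*$. On the probability side, $j^*$ is a random index determined by the realized durations, so Chebyshev cannot be applied verbatim to a fixed coordinate; I would either condition on the finitely many possible identities of the last-finishing task and use that every reduce duration of $J_i$ shares the same mean and variance, or recast the bound in terms of the last-scheduled reduce task, whose duration is measurable before the tie is resolved. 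Once these two points are settled, the remainder is routine bookkeeping.
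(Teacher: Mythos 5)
Your argument is, in substance, the paper's own proof run in the forward rather than contrapositive direction: the paper likewise isolates the last-finishing reduce task $\delta^{r,j}_i$ of $J_i$, argues that if the cluster ever serves a job of priority below $w_i/\phi_i$ during $[0,f_i-E^r_i-r\sigma^r_i]$ then all reduce tasks of $J_i$ must already have been scheduled by that time, deduces that the last finisher therefore has workload at least $E^r_i+r\sigma^r_i$, and closes with exactly your Chebyshev computation to obtain the bound $1/r^2$. So the two proofs share the same decomposition (deterministic scheduling claim plus one Chebyshev application); the real difference is that you explicitly flag two soft spots that the paper steps over silently, and it is worth recording how each fares.

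On the first (the stall issue): you are right that this is a genuine gap in the paper's argument. The paper's sentence ``the workload of $\delta^{r,j}_i$ is at least $E^r_i+r\sigma^r_i$'' conflates elapsed time since machine assignment with processing workload, and since Algorithm 1 may assign a reduce task a machine while the Map phase of $J_i$ is still running (during which the task makes no progress), the implication fails exactly in the scenario you describe. Your proposed repair --- tracking the last instant at which $J_i$ owns a not-yet-assigned task and showing it dominates the effective start $s^*$ --- is the right kind of fix and goes beyond what the paper does. On the second (the random index $j^*$): here your first suggested fix does not work. Conditioning on the event that a given reduce task is the last finisher biases its duration upward (the last finisher behaves like a maximum, so its conditional law is stochastically larger than the unconditional one), which invalidates Chebyshev applied coordinatewise; and the last-\emph{scheduled} reduce task need not be the one finishing at $f_i$, so that recasting does not close the gap either. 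The elementary valid repair is a union bound over the $r_i$ reduce tasks,
\begin{equation*}
\Pr\Bigl\{\max_{1\le j\le r_i} p^{r,j}_i \ \ge\ E^r_i + r\sigma^r_i\Bigr\} \ \le\ \frac{r_i}{r^2},
\end{equation*}
which proves the lemma only with probability $1-r_i/r^2$ rather than the stated $\frac{r^2-1}{r^2}$. Note, however, that the paper itself applies Chebyshev to the random index verbatim, so the constant in the lemma inherits this same gap: read as a reconstruction of the paper's argument your proposal matches it, but read as a fully rigorous proof it must either accept the weaker union-bound constant or add a distributional assumption making the tail bound uniform over which task happens to finish last.
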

Refer to Appendix \ref{proof_lemma_1} for the detailed poof of Lemma \ref{Lemma_1}.
Based on Lemma \ref{Lemma_1}, we derive the following theorem which provides an upper bound for the flowtime of
each job:
\begin{Theorem}
\label{Theorem 1}
The flowtime of Job $J_i$ is bounded by $E^r_i + r \sigma^r_i + f_i^s/M$ with a probability at least $1 + 1/r^4 -\frac{2}{r^2}$.
\end{Theorem}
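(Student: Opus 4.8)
The plan is to combine Lemma~\ref{Lemma_1} with a concentration bound on the total realized workload of the high-priority jobs, and then to turn the resulting picture into a volume (work-conservation) argument. Concretely, I would introduce two events. Let $A$ be the event of Lemma~\ref{Lemma_1}, namely that throughout the interval $[0, f_i - E^r_i - r\sigma^r_i]$ the cluster is busy only with jobs of priority at least $w_i/\phi_i$; by the lemma $\Pr(A) \geq \frac{r^2-1}{r^2}$. Let $W = \sum_{j:\, w_j/\phi_j \geq w_i/\phi_i} \big(\sum_{k=1}^{m_j} p_j^{m,k} + \sum_{k=1}^{r_j} p_j^{r,k}\big)$ be the total realized workload of all jobs whose priority is at least that of $J_i$, and let $B$ be the event $\{W \leq f_i^s\}$. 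The heart of the argument is that, on $A \cap B$, every one of the $M$ machines spends the whole interval performing productive work belonging to these high-priority jobs, so the volume of work completed during $[0, f_i - E^r_i - r\sigma^r_i]$ equals $M\,(f_i - E^r_i - r\sigma^r_i)$ and cannot exceed the total such workload $W \leq f_i^s$. Rearranging $M(f_i - E^r_i - r\sigma^r_i) \leq f_i^s$ yields exactly $f_i \leq E^r_i + r\sigma^r_i + f_i^s/M$.

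It remains to bound $\Pr(B)$. Writing the high-priority tasks as independent workloads with means and standard deviations $\{\mu_k, s_k\}$, we have $\mathbb{E}[W] = \sum_k \mu_k$, $\mathrm{Var}(W) = \sum_k s_k^2$, and by the definition of $\phi_j$ in \eqref{workload}, $f_i^s = \mathbb{E}[W] + r\sum_k s_k$. Hence by Chebyshev's inequality,
\begin{equation}
\Pr(W > f_i^s) = \Pr\!\big(W - \mathbb{E}[W] > r\textstyle\sum_k s_k\big) \leq \frac{\sum_k s_k^2}{r^2\,(\sum_k s_k)^2}.
\end{equation}
The elementary inequality $\sum_k s_k^2 \leq (\sum_k s_k)^2$ (all cross terms are nonnegative) then gives $\Pr(W > f_i^s) \leq 1/r^2$, so $\Pr(B) \geq \frac{r^2-1}{r^2}$. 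This is precisely why the standard deviations are inflated by the factor $r$ inside $\phi_j$: it manufactures exactly the Chebyshev slack needed to absorb the workload fluctuation with probability $1 - 1/r^2$.

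Finally I would combine the two events. Treating $A$ and $B$ as independent (each an $r$-sigma Chebyshev tail event), $\Pr(A \cap B) \geq \Pr(A)\Pr(B) \geq \big(\tfrac{r^2-1}{r^2}\big)^2 = 1 + \tfrac{1}{r^4} - \tfrac{2}{r^2}$, which matches the stated bound. I expect the main obstacle to be the work-conservation step rather than the probabilistic estimates: one must rule out both idle machines and reduce tasks that occupy a machine yet make no progress because their Map phase is unfinished (the precedence constraint \eqref{reduce_begin}), since any such slack would break the identity ``volume $= M \times$ duration'' on which the bound rests. Establishing full, productive utilization on $[0, f_i - E^r_i - r\sigma^r_i]$ from the greedy priority rule together with the standing assumption $\sum_i c_i > M$, and making the independence of $A$ and $B$ precise (or replacing the product bound by a careful joint estimate), are the two points I would treat most carefully.
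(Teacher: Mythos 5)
Your proposal follows essentially the same route as the paper's own proof: the paper likewise defines the realized high-priority workload (its $X$, your $W$), bounds $\Pr\{X \geq f_i^s\}$ by $1/r^2$ via Chebyshev with the $r$-inflated deviation built into $\phi_j$, invokes Lemma~\ref{Lemma_1}, multiplies the two probabilities to get $(r^2-1)^2/r^4 = 1 + 1/r^4 - 2/r^2$, and concludes $M(f_i - E^r_i - r\sigma^r_i) \leq f_i^s$ by the same volume argument. If anything, you are more explicit than the paper on its two glossed-over points --- the unproven independence behind the product of probabilities (a union bound would only give $1 - 2/r^2$) and the full-utilization claim under the precedence constraint --- both of which you correctly flag rather than resolve, exactly as the paper leaves them.
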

Refer to Appendix \ref{proof_theorem_1} for the proof of Theorem \ref{Theorem 1}.

\begin{remark}
When the variance of the task workload is zero, the flowtime of each job is bounded by $E^r_i + f_i^s/M$ under Algorithm \ref{offline_algorithm}. Regardless of the type of 
scheduler, the flowtime of each job must be larger than $E^r_i$. On the other hand, the performance of the optimal scheduler is no better than the SRPT scheduler with one machine in terms of weighted sum of job flowtimes. Under the SRPT scheduler with one machine, the flowtime of each job is just $f_i^s/M$. Hence, we conclude that Algorithm \ref{offline_algorithm} achieves a constant competitive ratio of two.

However, if the variance is non-negligible, Algorithm \ref{offline_algorithm} could not achieve a constant competitive ratio but still provides an upper bound for the flowtime of each individual job.
\end{remark}

\section{Online scheduling with cloning for job arrival over time}
\label{online-scheduling}
\vspace{-.3em}

In this section, we first present an approximated algorithm for the online scheduling case where all the jobs arrive at the cluster over time. After that, we
provide an upper bound for the competitive ratio of the proposed algorithm.

\subsection{Shortest Remaining Processing Time based Machine Sharing Principle}
\label{algorithm_design}
Extended from the offline algorithm presented in Section \ref{bulk_arrival},  we design a SRPT based machine sharing algorithm for this online case. The principle of machine sharing  is motivated by the work in \cite{scalably-scheduling,Reduce_variance,energy_efficient} where the machines are shared among the latest jobs arriving at the cluster (LAPS).
A classic result in \cite{scalably-scheduling} shows that
the LAPS algorithm  is scalable for minimizing the total flowtime of jobs with sublinear speedup functions on multiple machines. However, in our algorithm, we share the machines among jobs with the smallest remaining workload.  Other than that, we also make clones for the tasks according to machine availability.

We call this approximated algorithm the \textit{Shortest Remaining Processing Time based Machine Sharing plus Cloning} (SRPTMS+C).
At a high level, the SRPTMS+C algorithm works as follows: At the beginning of each time slot, the scheduler computes a priority for every alive job (i.e., not-yet-finished job). Let $\epsilon$ be a number such that $0 < \epsilon < 1$.  Jobs with the highest priorities share the machines in proportion to their weights so that the weight of all running jobs is an $\epsilon$ fraction of the total weights of  the alive jobs in the system. Observe that when $\epsilon$ is set to 1, the scheduler just reduces to the fair scheduler in  Hadoop \cite{hadoop}. On the other hand,  if $\epsilon$ is close to 0, the scheduler becomes the SRPT scheduler. By tuning the parameter $\epsilon$, we could obtain a scheduler
that best fits a cluster.  More importantly, this $\epsilon$ fraction sharing principle yields a bounded competitive ratio as presented in the following sections.

Let $\psi^s (l)$ be the set of alive jobs at the beginning of time slot $l$. Denote by $m_i(l)$ and $r_i(l)$ the number of unscheduled map and reduce tasks of job $J_i$ respectively. The remaining effective workload of job $J_i$ can be characterized by:
\begin{equation}
U_i(l) = m_i(l)\cdot(E^m_i + r\sigma^m_i) + r_i(l)\cdot(E^r_i + r\sigma^r_i)
\end{equation}
The scheduler computes $\frac{w_i}{U_i(l)}$ for each job in $\psi^s (l)$ and guarantees that the jobs with larger $\frac{w_i}{U_i(l)}$ have higher priorities to be scheduled. Let
\begin{equation}
\label{sum-weight}
W(l) = \sum_{i \in \psi^s (l)}w_i
\end{equation}
and $\psi_i^s (l)$ be the set of jobs alive for SRPTMS+C at time slot $l$ which have lower priorities (i.e., smaller $\frac{w_i}{U_i(l)}$) than $J_i$. $J_i$ is also included in $\psi_i^s (l)$. Define $W_i(l) = \sum_{j \in \psi_i^s (l)}w_i$
and let
 $$g_i(l) = \left\{\begin{array}{cc}
\frac{w_i \cdot M}{(\epsilon W(l))} & W_i(l) - w_i \geq (1-\epsilon)W(l) \\
0   & W_i(l) < (1-\epsilon)W(l) \\
\frac{(W_i(l) - (1-\epsilon)W_l)\cdot M}{(\epsilon W(l))} & otherwise
\end{array}\right.$$

Each job $J_i \in \psi^s (l)$ shares $g_i(l)$ machines within the cluster, including those ones that are still running the tasks of $J_i$, whose size is defined as $\sigma_i(l)$. Hence, the number of machines assigned  to $J_i$ in time slot $l$ is $(g_i(l) - \sigma_i(l))$.

\IncMargin{1em}
\begin{algorithm}
\label{ESE_code}
\caption{SRPTMS+C Algorithm Design for Online Scheduling}
\Indm
\Indp
Update $\psi^s (l)$,  the set of jobs which have unscheduled tasks at current time slot;

Update the number of available machines $M(l)$;

Compute ${U_i(l)}$ for each $J_i \in \psi^s (l)$ and sort the jobs according to the decreasing order of $\frac{w_i}{U_i(l)}$;

Compute $W(l)$ based on Equation \eqref{sum-weight};

\For{the Job $J_i \in \psi^s (l)$}
{
Compute $g_i(l)$, the number of machines $J_i$ deserved according to the $\epsilon$ fractional sharing policy;
}

\For{the Job $J_i \in \psi^s (l)$ \&\& $g_i(l) > 0$}
{
\label{repeat-process}

Count the number of machines which still run the tasks of $J_i$ including all the clones and denote it by $\sigma_i(l)$;


Compute the number of newly available machines which is $\xi_i(l) = g_i(l) - \sigma_i(l)$;

\If{$\xi_i(l) \leq 0$}
{
 continue;
}
\If{$\xi_i(l) < M(l)$}
{
 Assign $\xi_i(l)$ extra machines to $J_i$;

\textbf{Call} the task scheduling procedure for $J_i$ with $\xi_i(l)$ machines with returning value $\pi_i(l)$;

 $M(l)$ -= $\pi_i(l)$;
}
\If{$\xi_i(l) \geq M(l)$}
{
 Assign $M(l)$ extra machines to $J_i$;

 \textbf{Call} the task scheduling procedure for $J_i$ with $M(l)$ machines with returning value $\pi_i(l)$;

 $M(l)$ -= $\pi_i(l)$;

}
}



return;
\end{algorithm}
\DecMargin{1em}

\IncMargin{1em}
\begin{procedure}
\caption{Task Scheduling for Job $J_i$ with $x$ newly allocated machines ()}
\Indm
\KwIn{The number of newly allocated machines $x$ and the running status;}
\KwOut{Task scheduling decision for $J_i$ and returning value $\pi_i(l)$.}
\Indp
Count $m_i(l)$ and $r_i(l)$, the number of unscheduled map tasks and reduce tasks for $J_i$ respectively;

\uIf{$m_i(l) > 0$ \&\& $m_i(l) \geq x$}
{
 run $[x/m_i(l)]$ copies for each unscheduled task on available machines.

 \Return $x - [x/m_i(l)]*m_i(l)$;
}
\uElseIf{$m_i(l) > 0$ \&\& $m_i(l) < x$}
{
 Choose $x$ unscheduled map tasks uniformly at random and run one copy for each task on available machines;

 \Return 0;
}
\Else
{
Repeat the same scheduling process for reduce tasks with $x$ allocated machines.
}

\label{task-mapping}
\end{procedure}
\DecMargin{1em}

 \subsection{Task-Cloning Algorithm Design}
When allocating the number of machines for each job (i.e., $(g_i(l) - \sigma_i(l))$), there may exist one case which violates the basic sharing principle in
Section \ref{algorithm_design}, namely, the number of machines running the tasks of $J_i$ (i.e., $\sigma_i(l)$) already exceeds $g_i(l)$ for some $i$.
Under such situation, the scheduler reserves the work already completed for job $J_i$ and just runs the tasks of $J_i$ with their clones on $\sigma_i(l)$ machines. 
In other words,
the scheduler does not allow preemption and lets $J_i$ occupy these extra machines.
Due to this non-preemptive mechanism, the exact number of machines shared by Job $J_i$ may be larger than $g_i(l)$.

After the number of machines is allocated for each job, the scheduler needs to choose appropriate tasks of the alive jobs for scheduling and make cloning decisions carefully. Following the precedence  constraint of the Map and Reduce phase, the scheduler begins to schedule reduce tasks after all the map tasks completed. In addition, the clones are made for the tasks depending on whether the number of machines allocated to a particular job is larger than the number of unscheduled tasks.  Take job $J_i$ for example: When $g_i(l) - \sigma_i(l) > c_i(l)$, cloning will be made to fully utilize these machines allocated  to $J_i$. To be more specific, the scheduler spawns the same number of clones for all the unscheduled tasks in $g_i(l)$. Otherwise, tasks with fewer clones are more likely to lag behind. Thus, each unscheduled task of $J_i$ will be made $[(g_i(l) - \sigma_i(l))/c_i(l)]$ \footnote{$[x]$ denotes the rounding of the real number $x$.} copies. 
In contrast, when $g_i(l) - \sigma_i(l) \leq c_i(l)$, following the same argument of the offline scheduling algorithm, clones are not made in this case. Hence, the scheduler chooses some unscheduled tasks from $J_i(l)$ at random and launch it without cloning.  

Algorithm \ref{ESE_code} presents the pseudo-code of the algorithm.

\subsection{Resource augmentation analysis}
In this section, we use resource augmentation to analyze the performance of the SRPTMS+C algorithm.
Before going to the details of the analysis,
we first present the following definition which characterizes the performance of an approximated algorithm.

\begin{definition}
An approximated algorithm is \textit{s-speed c-competitive} if the algorithm's objective is within a factor of $c$
of the optimal solution's objective when the algorithm is given $s$ resource augmentation \cite{speedis}.
\end{definition}

\begin{prop}
\label{convex_function}
Consider any continuous and concave function $f: \mathbb{R}^{+} \rightarrow \mathbb{R}^{+}$ with $f(0) \geq 0$. Then for any $b \geq a > 0$, we have
 $\frac{f(a)}{a} \geq \frac{f(b)}{b}$.
\end{prop}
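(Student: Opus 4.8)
The plan is to exploit concavity directly by writing the smaller argument $a$ as a convex combination of $0$ and the larger argument $b$. Since $b \geq a > 0$, the ratio $\lambda := a/b$ lies in $(0,1]$, so that $a = \lambda \cdot b + (1-\lambda)\cdot 0$ is a genuine convex combination of the two points $b$ and $0$, both in the domain $\mathbb{R}^{+}$ on which $f$ is defined and concave.

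First I would apply the definition of concavity to this convex combination, obtaining
\begin{equation}
f(a) = f\bigl(\lambda b + (1-\lambda)\cdot 0\bigr) \geq \lambda f(b) + (1-\lambda) f(0).
\end{equation}
Next I would invoke the hypothesis $f(0) \geq 0$ together with $1-\lambda \geq 0$ to discard the last term, yielding $f(a) \geq \lambda f(b) = (a/b) f(b)$. Finally, dividing through by $a > 0$ (which is legitimate precisely because we assumed $a>0$) gives $f(a)/a \geq f(b)/b$, which is the desired inequality. The degenerate case $a = b$ is immediate, so it suffices to treat $b > a > 0$, where $\lambda \in (0,1)$ strictly.

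I do not anticipate a genuine obstacle here: the entire argument rests on the single observation that every point strictly between $0$ and $b$ can be represented as a convex combination of its endpoints, after which concavity and the sign condition $f(0)\geq 0$ do all the work. The only points requiring minor care are verifying that the combination coefficient $\lambda = a/b$ is admissible (i.e. lies in $[0,1]$, which follows from $0 < a \leq b$) and that the division by $a$ preserves the inequality direction (guaranteed since $a>0$). The role of the hypothesis $f(0)\geq 0$ is worth emphasizing, as it is exactly what lets us drop the $(1-\lambda)f(0)$ term without reversing the inequality; without it the monotonicity of $x \mapsto f(x)/x$ could fail.
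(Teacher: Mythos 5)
Your proof is correct and is essentially identical to the paper's own argument: the paper also writes $a$ as a convex combination of $0$ and $b$ (with coefficient $\lambda = 1 - a/b$ rather than your $a/b$, a purely notational difference), applies the concavity inequality, and uses $f(0) \geq 0$ to drop the residual term. Your write-up is, if anything, slightly more careful in noting the admissibility of $\lambda$ and the direction-preserving division by $a > 0$.
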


\begin{proof}
According to the definition of concave function, it holds that $f(\lambda x + (1-\lambda)y) \geq \lambda f(x) + (1-\lambda)f(y)$ for any
$x,y \in \mathbb{R}^{+}$ and $\lambda \in [0,1]$. Specially, consider $x=0$, $y=b$ and $\lambda = 1 - \frac{a}{b}$. Then we have $f(a) \geq (1-\frac{a}{b})f(0) +
\frac{a}{b}f(b) \geq \frac{a}{b}f(b)$. Q.E.D.
\end{proof}

\begin{remark}
\label{extention}
Based on Proposition \ref{convex_function}, we conclude that $f(\frac{1}{\xi}\cdot a) \geq \frac{1}{\xi} f(a)$, this can be proved by substituting $x = \frac{1}{\xi}\cdot a$ and
$y = a$ into the inequality.
\end{remark}

With the help of Proposition \ref{convex_function}, we derive the following theorem which provides an upper bound
for the competitive ratio of SRPTMS+C.

\begin{Theorem}
\label{competitive_ratio}
The algorithm SRPTMS+C is $(1+\epsilon)-speed$ $o(\frac{1}{\epsilon^2})-competitive$ for the expectation of weighted sum of job flowtimes
when $0< \epsilon < 1$.
\end{Theorem}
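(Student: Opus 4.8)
The plan is to establish the bound by an \emph{amortized local competitiveness} argument in the resource-augmentation model, comparing SRPTMS+C running its $M$ machines at speed $(1+\epsilon)$ against an optimal offline schedule on $M$ unit-speed machines. Writing $W^A(t)=\sum_{i\in A(t)}w_i$ and $W^O(t)=\sum_{i\in O(t)}w_i$ for the instantaneous weighted-flowtime cost rates of the algorithm and of OPT (where $A(t),O(t)$ are the respective sets of alive jobs), the weighted sums of flowtimes are $\int_0^\infty W^A(t)\,dt$ and $\int_0^\infty W^O(t)\,dt$. I would exhibit a potential function $\Phi(t)$ with $\Phi(0)=\Phi(\infty)=0$, which does not increase at job arrivals, and which satisfies the running inequality
\[
W^A(t)+\frac{d\Phi(t)}{dt}\;\le\; c(\epsilon)\,W^O(t)
\]
at every time $t$ at which no job arrives, for some $c(\epsilon)$ with $c(\epsilon)\,\epsilon^2\to 0$. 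Integrating over time and using the boundary conditions gives $\int W^A\le c(\epsilon)\int W^O$, which is exactly the claimed $(1+\epsilon)$-speed $o(1/\epsilon^2)$-competitiveness.

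Before the potential argument I would discharge two modelling issues so that progress can be measured by a single deterministic quantity. First, to remove the randomness and the cloning, I would work with the \emph{effective remaining workload} $U_i(t)=m_i(t)(E_i^m+r\sigma_i^m)+r_i(t)(E_i^r+r\sigma_i^r)$ already used by the scheduler, and show that when $\xi$ machines are cloned over $c$ unscheduled tasks with $\lfloor\xi/c\rfloor$ copies each, the expected rate at which $U_i$ decreases is, up to the speed factor $(1+\epsilon)$, at least proportional to $\xi$. This is exactly where Proposition~\ref{convex_function} and Remark~\ref{extention} enter: concavity of $s_i^c$ guarantees that spreading copies across tasks loses no more than a constant factor relative to dedicating one machine per task, so ALG's augmented speed is not squandered by cloning. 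Second, I would absorb the Map/Reduce precedence constraint by tracking each job through the single quantity $U_i$ and using that the scheduler exhausts all map tasks before any reduce task; the only subtlety---reduce tasks that are scheduled but stalled waiting for maps---contributes no spurious progress and is charged to the map phase.

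With progress reduced to the decay of $U_i$, I would take a potential of SRPT-lag form, ranking jobs by the density $w_i/U_i$ that the algorithm uses for prioritization:
\[
\Phi(t)=\frac{\kappa}{\epsilon\,M}\sum_{i}w_i\sum_{j\,\succeq\,i}\bigl(U_j^A(t)-U_j^O(t)\bigr)^{+},
\]
where $U_j^A,U_j^O$ denote remaining effective work in ALG and in OPT, $j\succeq i$ means $j$ has priority at least that of $i$, and $\kappa$ is an absolute constant. The boundary conditions are immediate, and an arriving job enters both schedules with equal remaining work, so every $(\cdot)^{+}$ term starts at $0$ and $\Phi$ does not jump upward. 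For the running inequality I would split $d\Phi/dt$ into the contribution of ALG's own processing (which can only decrease $U_j^A$, hence decrease $\Phi$) and the contribution of OPT's processing (which decreases $U_j^O$ and can increase $\Phi$). The decrease from ALG is where the $\epsilon$-fractional sharing pays off: the highest-priority jobs, carrying an $\epsilon$-fraction of the alive weight $W(t)$, collectively receive enough speed-$(1+\epsilon)$ machines that the negative term dominates $W^A(t)$ after multiplication by the $1/\epsilon$ in $\Phi$, while OPT's positive contribution is bounded by $c(\epsilon)\,W^O(t)$.

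The main obstacle is the running inequality, specifically balancing the three sources of loss against one another. The $1/\epsilon$ in the potential is forced by the sharing rule, since only an $\epsilon$-fraction of weight is served at full intensity; a second $1/\epsilon$ appears when bounding OPT's adversarial contribution, because OPT may concentrate all of its unit-speed processing on jobs that are high-priority for ALG and thereby inflate the lag, and naively these combine to $O(1/\epsilon^2)$. Pushing the estimate to $o(1/\epsilon^2)$ requires using the full $(1+\epsilon)$ speed surplus---not merely to cancel OPT's unit-speed progress but to gain an additional factor that is $o(1)$ relative to $1/\epsilon$---so the delicate point is to show that the surplus $\epsilon$ of speed, acting on the $\epsilon$-weight of top-priority jobs, contributes a term that beats one of the two $1/\epsilon$ factors as $\epsilon\to 0$. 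Verifying this, together with the cloning and precedence bookkeeping of the previous step, is the crux of the proof.
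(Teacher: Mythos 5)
There is a genuine gap, and it sits exactly where you placed it. The central deliverable---the running inequality $W^A(t)+d\Phi/dt\le c(\epsilon)W^O(t)$---is never derived; your own text designates the balancing of the loss terms as ``the crux'' and leaves it unverified, so what you have is a proof plan, not a proof. Moreover, the target you set for yourself ($c(\epsilon)\epsilon^2\to 0$, i.e.\ little-$o$ taken literally) is \emph{stronger} than what the paper's argument actually establishes: the paper's potential-function computation ends with $\mathbbm{E}[A]\le\frac{C+1+\varepsilon}{\varepsilon^2}\mathbbm{E}[\mathrm{OPT}]$, where $C$ is the maximum number of clones per task in the optimal schedule---a genuine $\Theta(1/\epsilon^2)$ bound whose ``$o(1/\epsilon^2)$'' label is an abuse of notation. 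Your hunt for an extra $o(1)$ gain from the speed surplus ``beating one of the two $1/\epsilon$ factors'' is chasing a cancellation the paper never performs and for which your sketch offers no mechanism; in the paper the $(1+\varepsilon)$ speed is used only to produce the exact coefficient $-\frac{1+\varepsilon}{\varepsilon}$ on $\mathbbm{E}[dA(t)/dt]$, which cancels the $+1+\frac{1}{\varepsilon}$ accumulated from the cost rate and from OPT's drift, leaving $\frac{C+1+\varepsilon}{\varepsilon^2}\,\mathbbm{E}[d\,\mathrm{OPT}(t)/dt]$.

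Your specific potential also fails its own arrival condition. With $\Phi(t)=\frac{\kappa}{\epsilon M}\sum_i w_i\sum_{j\succeq i}\bigl(U_j^A(t)-U_j^O(t)\bigr)^{+}$, a newly arrived job $i^*$ contributes a brand-new outer term $w_{i^*}\sum_{j\succeq i^*}(U_j^A-U_j^O)^{+}$, which is strictly positive whenever some higher-priority job already lags OPT; the observation that $i^*$'s \emph{own} lag starts at zero does not help, and arrivals additionally reshuffle the ordering $\succeq$ because priorities are the time-varying densities $w_i/U_i(t)$. The paper sidesteps rank bookkeeping entirely: its potential is per-task, $\Psi(t)=\frac{1}{\varepsilon^2}\sum_{i\in\psi^s(t)}\sum_{j} w_i\,y_i^j(t)\,/\,s_i\!\left(w_iM/\varepsilon W(t)\right)$ with $y_i^j$ the positive part of the task-level workload lag, so the cloning speedup is baked into the denominator and Proposition \ref{convex_function} converts both OPT's machine allocation $a_i^{Oj}$ and the algorithm's allocation $a_i^{Sj}\le g_i(t)=w_iM/\varepsilon W(t)$ directly into weight terms---no separate ``cloning decreases $U_i$ proportionally'' lemma and no pairwise terms are needed. (The price is the extra constant $C$ from bounding $\sum_j w_i\le Cw_i$ over OPT's clones of a task, which your job-level aggregation would avoid---a real potential advantage of your route, but only if you repair the arrival jump, e.g.\ by restructuring so that arrival terms vanish identically, and then actually carry out the drift computation for both OPT's and SRPTMS+C's processing.)
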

The method of potential function analysis is widely adopted to derive performance bound with resource augmentation for
online parallel scheduling algorithms in the literature (e.g., \cite{competitive,energy_efficient,scalably-scheduling,SRPT_identical}).
The key step of this method is to define a proper potential function which combines the adversary and our algorithm.
To be more specific, let $A(t)$ and $OPT(t)$ be the accumulated weighted sum of job flowtimes in the algorithm's and adversary's
schedules, respectively. We define a potential function $\Phi(t)$ that satisfies the following properties which are extended from \cite{competitive}:

\begin{itemize}
\item Boundary Condition: $\Phi(0) = \Phi(\infty) = 0$.
\item Changes Condition when job arrives or completes: the value of the potential function decreases or remains the same  when a job arrives or completes
in our algorithm and the adversary.
\item Dynamic Changes Condition: with $\epsilon$ resource augmentation,
at any time when no job arrives or completes, $\mathbbm{E}[\frac{dA(t)}{dt}] + \mathbbm{E}[\frac{d\Phi(t)}{dt}] \leq \frac{c}{\epsilon^2}\mathbbm{E} [\frac{dOPT(t)}{dt}]$.
\end{itemize}

By integrating over time, one can see that the existence of such a potential function is sufficient to yield a $(1+\epsilon)-speed$ $o(\frac{1}{\epsilon^2})-competitive$
algorithm. Refer to Appendix \ref{proof_a} for the detailed proof.

\section{Performance Evaluation}
\label{evaluation}
In this section, we evaluate the performance of the SRPTMS+C algorithm via
extensive simulations driven by Google cluster-usage traces
\cite{Google_trace}. The traces contain the information of job submission and completion time of
Google services on a cluster
of 12K servers. It also includes the number of tasks as well as the duration of each task. In addition, the priority for each job
ranges from 0 to 11 and we just treate this priority as the job weight.
From the traces, we extract the statistics of more than 6000 jobs during a 12-hour period.
 We already exclude those jobs which have specific constraints on machine attributes.
The detailed job statistics are illustrated in Table \ref{table-trace-1}.

\begin{table}
\normalsize
\centering
\caption{Google trace data statistics}
\label{table-trace-1}
\begin{tabular}{|c|c|}
  \hline
  Total number of Jobs &  6064\\
  \hline
  Trace duration (s) & 35032\\
  \hline
  Average number of tasks per job & 26.31 \\
  \hline
  Minimum task duration (s) & 12.8 \\
  \hline
  Maximum task duration (s) & 22919.3 \\
  \hline
  Average task duration (s) & 1179.7 \\
  \hline
\end{tabular}
\vspace{-1em}
\end{table}

When running the simulations, we estimate the distribution for the workloads of all the tasks within each job phase. Once a cloning copy is made for a particular task, the workload for this clone is just drawn independently from the estimated distribution. We repeat the same simulation for each of the following evaluations ten times and take the
average to obtain the final result.

\begin{figure*}
\centering
\begin{minipage}{.32\textwidth}
\centering
\includegraphics[width=\linewidth]{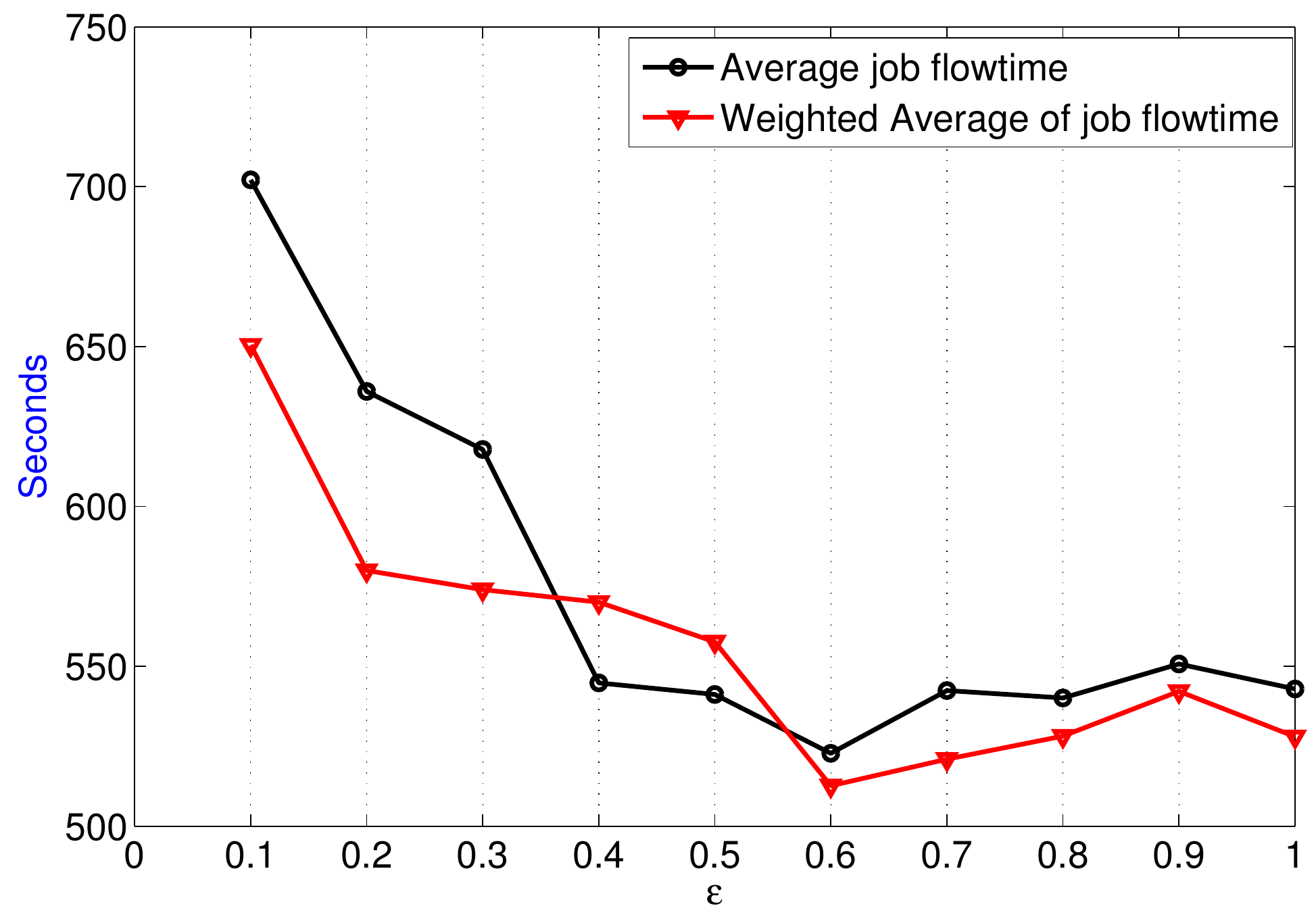}
\caption{The weighted/unweighted average of job flowtimes for different $\epsilon$ under the SRPTMS+C algorithm when $r = 0$.}
\label{epsilon}
\end{minipage}\hfill
\begin{minipage}{.32\textwidth}
\centering
\includegraphics[width=\linewidth]{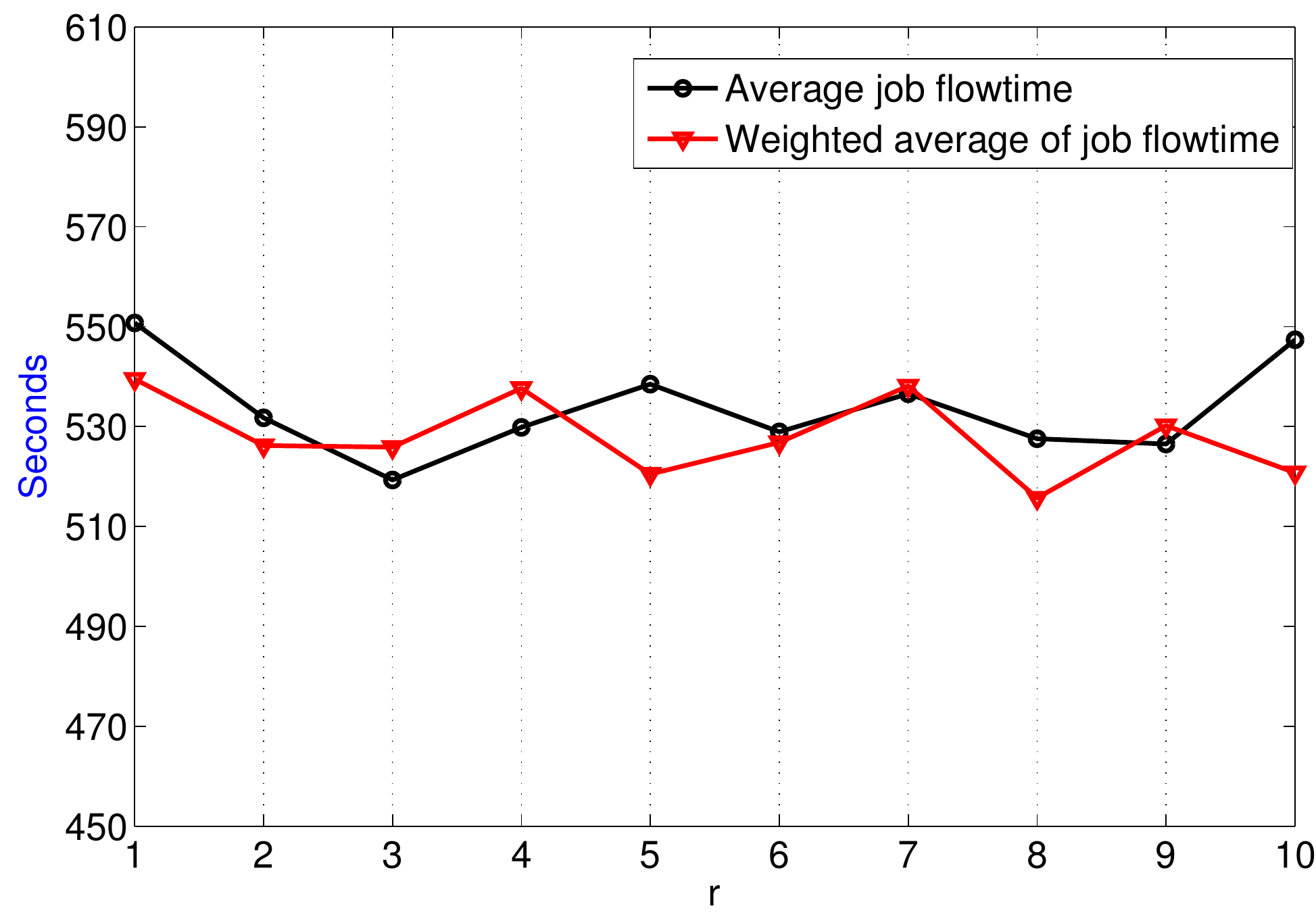}
\caption{The weighted/unweighted average of job flowtimes for different $r$ under the SRPTMS+C algorithm when $\epsilon = 0.6$.}
\label{r}
\end{minipage}\hfill
\begin{minipage}{.32\textwidth}
\centering
\includegraphics[width=\linewidth]{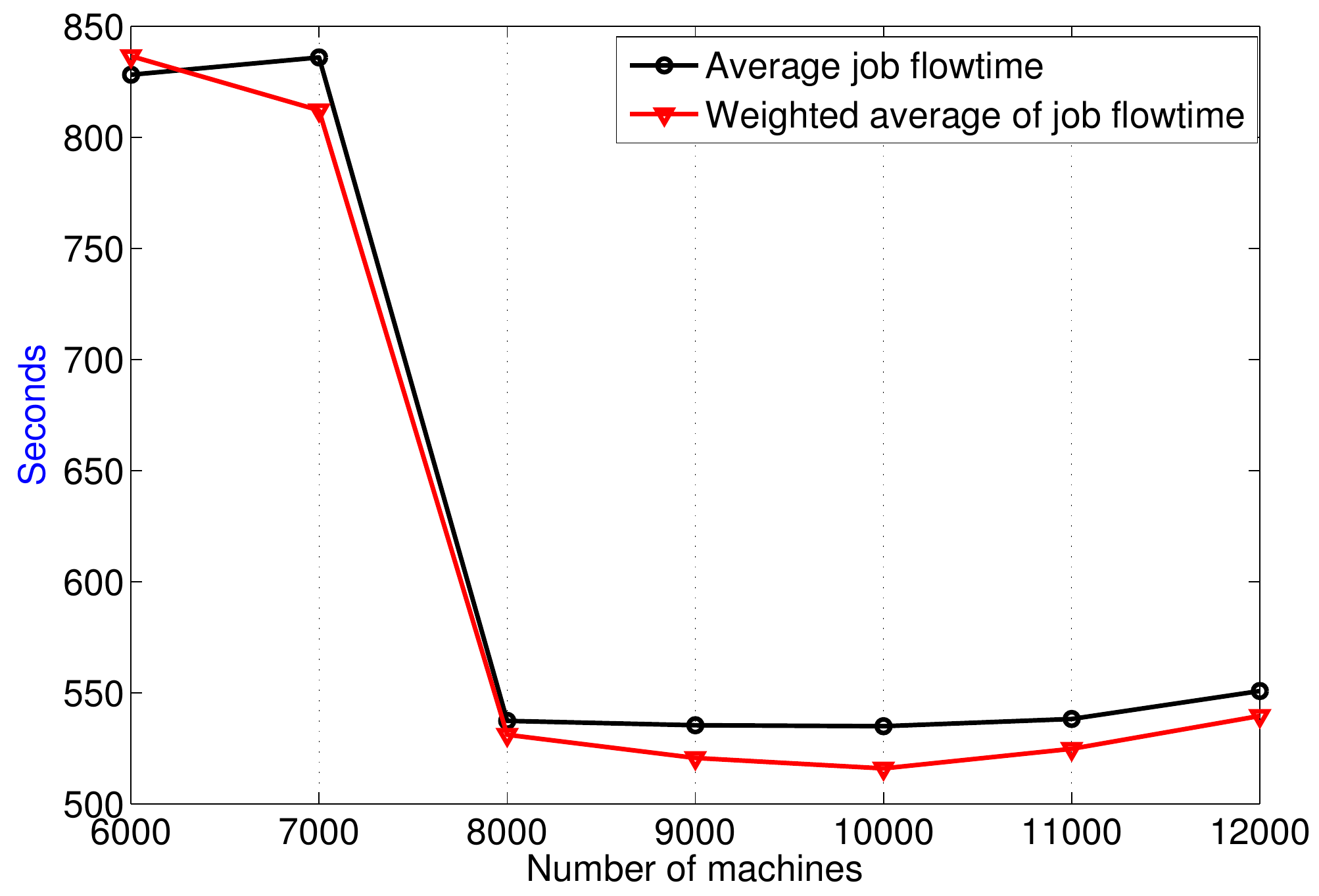}
\caption{The weighted/unweighted average of job flowtimes under different number of machines for SRPTMS+C when $\epsilon = 0.6$ and $r = 3$.}
\label{num}
\end{minipage}
\vspace{-0.8em}
\end{figure*}

\subsection{Baseline Algorithms for comparison}
We adopt the following two algorithms as the baselines to compare with the SRPTMS+C algorithm:

\begin{itemize}
\item \textbf{Microsoft Mantri's Speculative Execution Scheme:} The speculative execution scheme of Mantri is demonstrated to be the most effective one among all the straggler-detection based schemes \cite{Outliers}. Mantri estimates  the remaining time to finish, $t_{rem}$, for each task and predicts the required execution time of a relaunched copy of the task, $t_{new}$. Once a machine becomes available, the system makes a decision on whether to launch a backup copy based on the statistics of $t_{rem}$ and $t_{new}$. Specifically, another copy is launched if the inequality $\mathbb{P}(t_{rem}>2*t_{new}) > \delta$ holds.
\item \textbf{Smart Cloning Algorithm (SCA):} SCA is a cloning algorithm which is proposed in \cite{speculative-multiple-optimization}. At the beginning of each time slot, SCA first runs a convex program to determine the number of copies assigned for each task and then launch all the copies simultaneously on available machines.  SCA has been demonstrated to cut
down the elapsed time of small jobs substantially.
\end{itemize}

Instead of comparing the weighted sum of job flowtimes directly, we take the weighted average for ease of presentation. Moreover, we also compare the unweighted average as well as the cumulative distribution function (i.e., CDF) of job flowtimes against different algorithms. The time scale of each slot is 1 second in our simulations.

\begin{figure*}
\centering
\begin{minipage}{.32\textwidth}
\centering
\includegraphics[width=\linewidth]{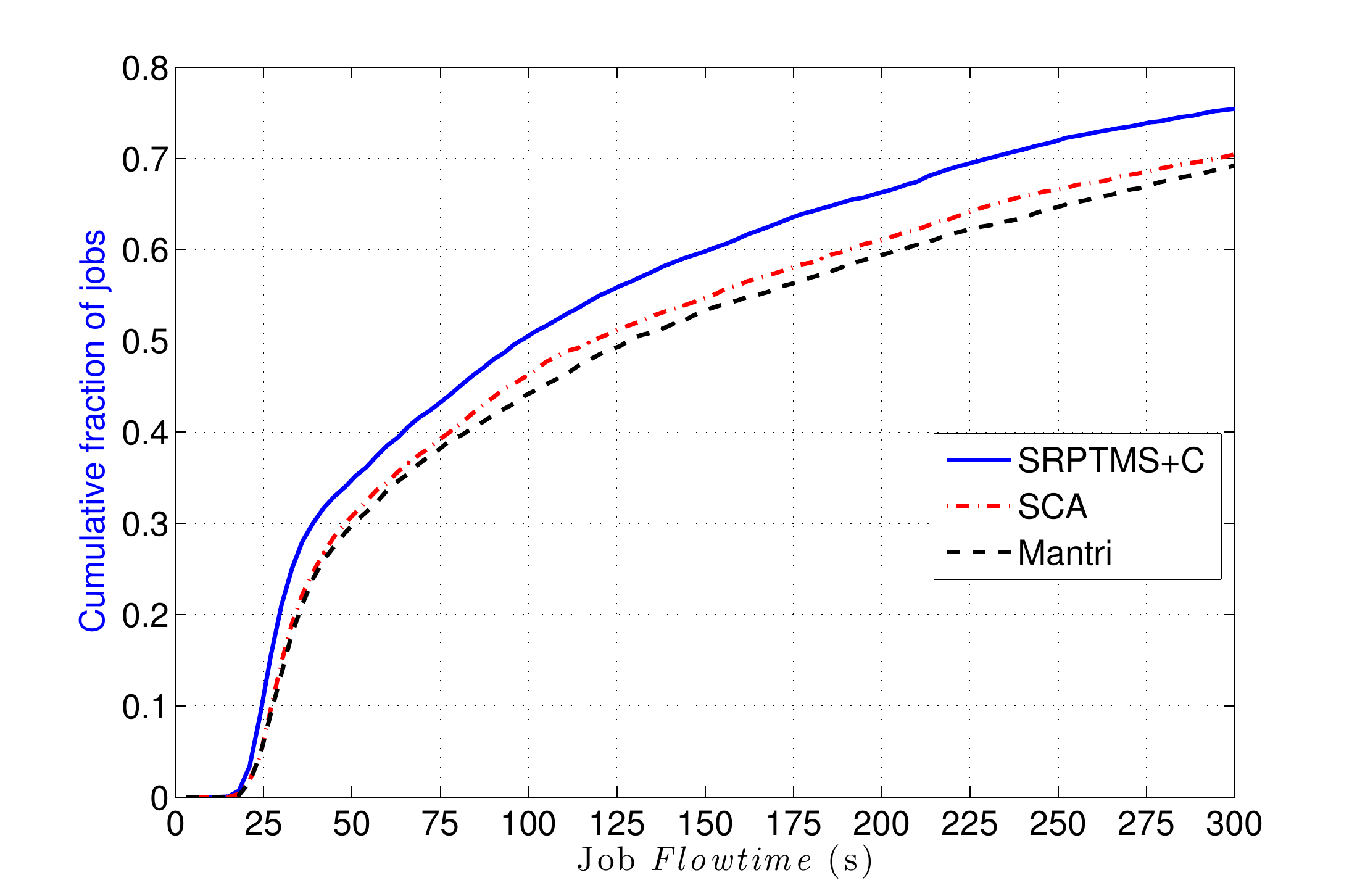}
\caption{The cumulative fraction of the jobs within
the  flowtime ranging from 0 to 300 seconds under different algorithms.}
\label{big_cmf}
\end{minipage}\hfill
\begin{minipage}{.32\textwidth}
\centering
\includegraphics[width=\linewidth]{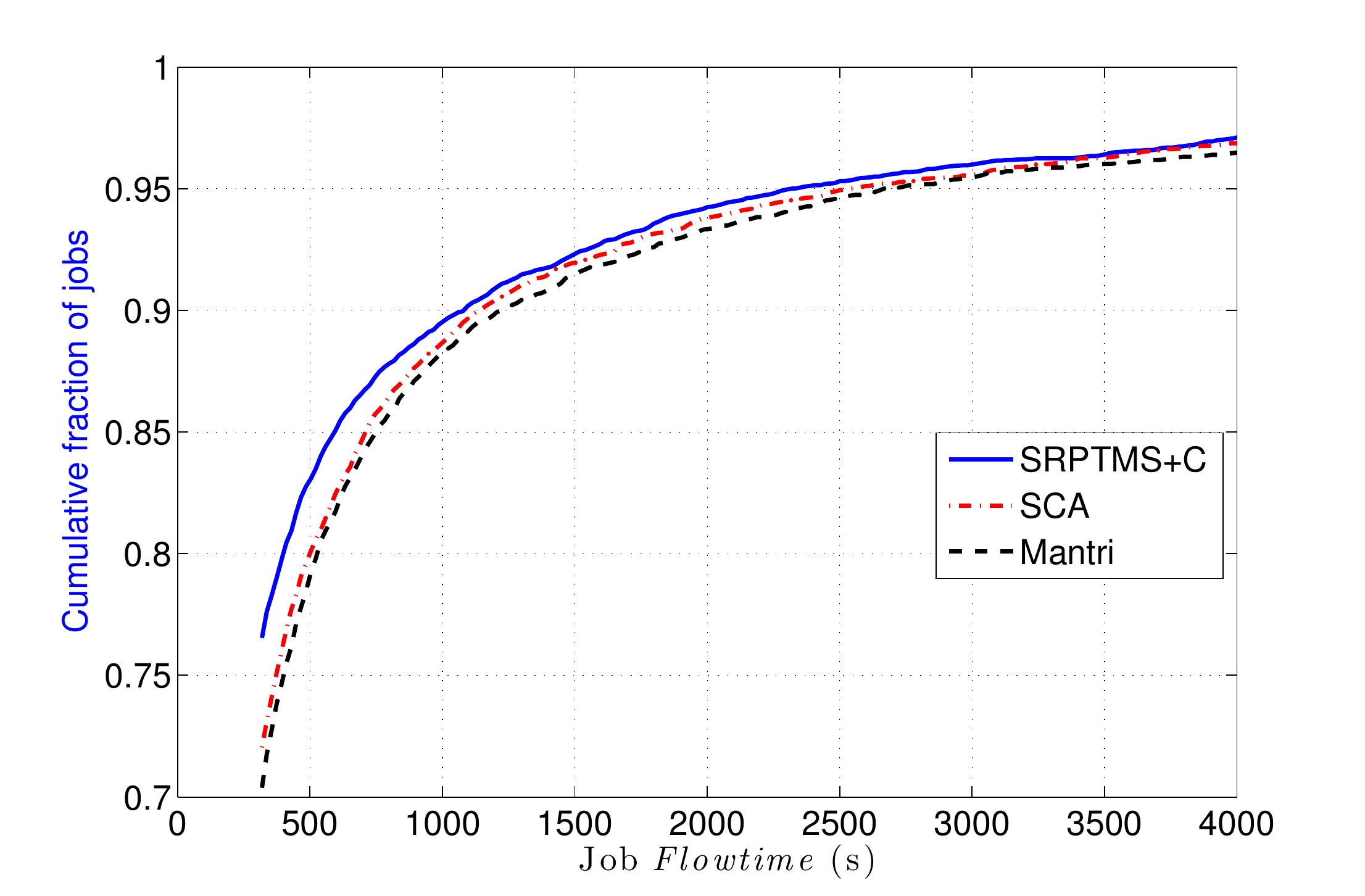}
\caption{The cumulative fraction of the jobs within
the flowtime ranging from 500 to 4000 seconds under different algorithms.}
\label{small_cmf}
\end{minipage}\hfill
\begin{minipage}{.32\textwidth}
\centering
\includegraphics[width=\linewidth]{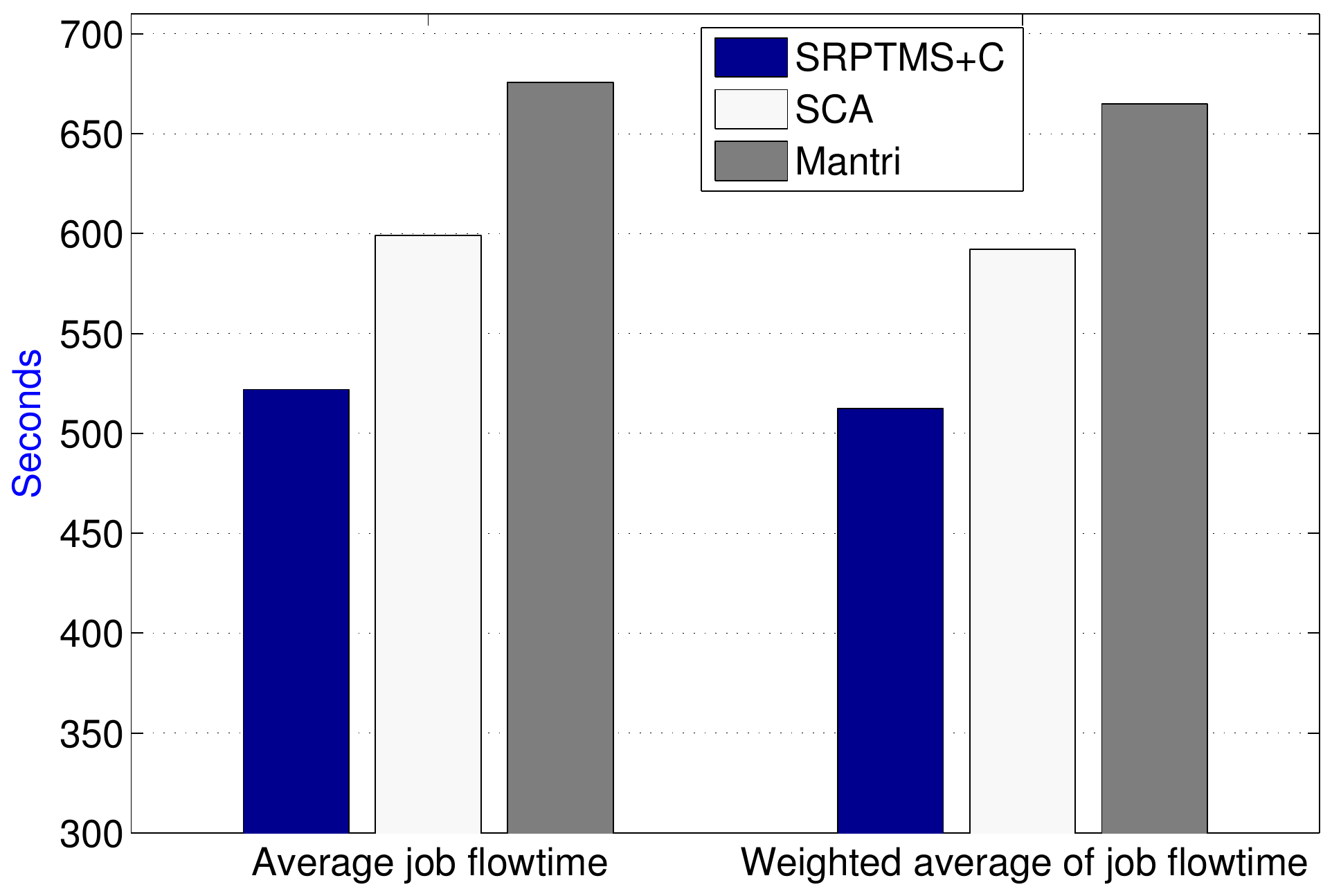}
\caption{The weighted/unweighted average of job flowtimes under different algorithms within the cluster that has 12K machines.}
\label{averagedflowtime}
\end{minipage}
\vspace{-1.0em}
\end{figure*}

\subsection{The impact of $\epsilon$, $r$ and the number of machines in the cluster}
In this subsection, we first evaluate the impact of $\epsilon$ and $r$ on the average  weighted/unweighted job flowtime under the SRPTMS+C algorithm in the cluster that
contains 12K machines.  Fig.~\ref{epsilon} depicts the evaluation result under different $\epsilon$ when $r=0$. Observe that when $\epsilon = 0.6$, which corresponding to the scheduler that schedules nearly half of the alive jobs with smaller effective workloads in each time slot, both of these two metrics attain the minimum. 

To further evaluate the impact of $r$ on the cluster performance, we set $\epsilon$ to 0.6 and evaluate the weighted/unweighted average of job flowtimes for different $r$ under SRPTMS+C algorithm. It shows in Fig.~\ref{r} that the unweighted average of job flowtimes attain the minimum when $r=3$ while the weighted average reaches its minimum under $r=8$.
In fact, both of these two metrics do not vary much between different $r$, the major reason is that the variation of task duration within each job phase for this particular job trace is small.

On the other hand, we scale out different number of machines in the cluster to show the impact on the job flowtime. Observe from Fig.~\ref{num}
that when the number of machines is around 8K, the performance is as equally well as  it in the original cluster with 12K machines.  There is  enough resources to
make clones for small jobs under the SRPTMS+C algorithm although the cluster only has 8K machines. The flowtime of small jobs from this trace therefore reduces substantially under SRPTMS+C.

\subsection{Comparison against baseline algorithms}
Based on the evaluation results above, we choose $\epsilon$ to be 0.6 and $r$ to be 3 for the SRPTMS+C algorithm. We implement the three baseline algorithms as presented in their original papers in the cluster that contains 12K machines. The comparison results are illustrated in Fig.~\ref{big_cmf} and Fig.~\ref{small_cmf}. Fig.~\ref{big_cmf} depicts the CDF of  flowtime for the small jobs whose
 flowtime is between 0 and 300 seconds.  It indicates that the SRPTMS+C algorithm obtains the best performance for those small jobs. In SRPTMS+C, more than 50\% jobs complete within 100 seconds. In contrast, about 46\% and 44\% jobs complete within 100 seconds under SCA and Mantri respectively.

Fig.~\ref{small_cmf} depicts the CDF of  flowtime for the big jobs whose
 flowtime are between 300 and 4000 seconds. One can see that the SRPTMS+C algorithm still achieves the best performance for these big jobs. For instance, about 90\% jobs can complete within 1000 seconds under SRPTMS+C while only 88\% and 86\% jobs can complete within such time-span under SCA and Mantri respectively.

We illustrate the weighted/unweighted average of job flowtimes for this trace under different algorithms in Fig.~\ref{averagedflowtime}. It shows that both of these two metrics under SRPTMS+C are reduced by nearly 25\% comparing to Mantri baseline scheme. More importantly, the SRPTMS+C algorithm is much more efficient comparing to Mantri scheme in terms of implementation as the latter needs to monitor the progress of each running task which induces an extra system instrumentation.

\section{Conclusions}
\label{conclusion}
In this paper, we study the online scheduling problem in a  MapReduce cluster and formulate a stochastic optimization program with the objective to minimize the
weighted sum of job flowtimes. Following this model, we design the straggler-mitigation algorithms via task cloning,  which are motivated from the SRPT scheduler  in
both offline and online cases. In the offline case, we show that, with high probability, each job can complete within a time-span which is 
larger than the optimal scheduling algorithm by only a constant factor times the standard derivation of task duration under our algorithm. 
When the variance of task duration is negligible, the offline algorithm achieves a competitive ratio of 2. On the other hand,
we present the SRPTMS+C algorithm for the online case and provide an upper bound for the competitive ratio through the potential function analysis. Finally, we run several trace-driven simulations to evaluate the performance of the SRPTMS+C algorithm. It shows that SRPTMS+C cuts down the flowtime of small jobs substantially
 and reduces the wighted/unweighted sum of job flowtimes by nearly 25\% comparing to Mantri baseline scheme.

\bibliographystyle{abbrv}
\bibliography{scheduling}

\appendix
\subsection{Proof of Lemma \ref{Lemma_1}}
\label{proof_lemma_1}
\begin{proof}
Consider the following case where the cluster is processing some jobs with priorities smaller than $w_i/\phi_i$ during the interval $[0,f_i-E^r_i - r\sigma^r_i]$. In this case, the job $J_i$ must have already scheduled all the reduce tasks at time $f_i-E^r_i - r\sigma^r_i$ according to the offline algorithm we present. Further, we consider the reduce task which is finished at last in job $J_i$ and let $\delta^{r,j}_i$ denote it. According to the definition of $f_i$, $\delta^{r,j}_i$ finishes its work at time $f_i$. It implies that  the workload of $\delta^{r,j}_i$ is at least $E^r_i + r\sigma^r_i$. Applying the Chebyshev Inequality \cite{probability} here, we get the following formula:
\begin{equation}
\Pr\{r^j_i \geq E^r_i + r\sigma^r_i\} \leq \Pr\{|r^j_i - E^r_i| \geq r\sigma^r_i\} \leq \frac{1}{r^2}
\end{equation}
This completes the proof.
\end{proof}

\subsection{Proof of Theorem \ref{Theorem 1}}
\label{proof_theorem_1}
\begin{proof}
Denote by $X$ the work that the cluster has processed for the jobs with priority at least $w_i/\phi_i$. Thus, the following two equations hold:
\begin{equation}
E[X] = \sum_{j:w_j/\phi_j \geq w_i/\phi_i} m_j\cdot E^m_j  + r_j\cdot E^r_j.
\end{equation}

\begin{equation}
\sigma^2[X] = \sum_{j:w_j/\phi_j \geq w_i/\phi_i} m_j\cdot (\sigma^r_j)^2 + r_j\cdot (\sigma^r_j)^2.
\end{equation}
Applying the Chebyshev Inequality again, we conclude that the probability that $X$ is no less than $f_i^s$ is bounded by

\begin{equation}
\begin{split}
& \Pr \left\{X \geq f_i^s \right\} \\ & \leq \Pr \left\{\left|X - E[X]\right| \geq r\sum_{j:w_j/\phi_j \geq w_i/\phi_i}(m_j \sigma^m_j+r_j \sigma^r_j) \right\} \\ & \leq \frac{\sum_{j:w_j/\phi_j \geq w_i/\phi_i} m_j\cdot (\sigma^r_j)^2 + r_j\cdot (\sigma^r_j)^2}{[r\sum_{j:w_j/\phi_j \geq w_i/\phi_i}(m_j\cdot \sigma^m_j+r_j\cdot \sigma^r_j)]^2} \\ & \leq \frac{1}{r^2}
\end{split}
\end{equation}

Applying Lemma \ref{Lemma_1} here, with probability at least $\frac{r^2-1}{r^2}$, the cluster is processing the work of $X$ during the interval $[0,f_i-E^r_i - r\sigma^r_i]$.
There are M machines processing the task with unit speed in total. Thus, with probability at least $(r^2-1)^2/r^4$, the following inequality holds.

\begin{equation}
M*(f_i-E^r_i -r\sigma^r_i) \leq f_i^s
\end{equation}
The theorem immediately follows. Q.E.D.
\end{proof}

\subsection{Proof of Theorem \ref{competitive_ratio}}
\label{proof_a}
In the algorithm design of Section \ref{algorithm_design}, we assume that time is slotted. For convenience of analysis, here we
consider a more general case where the time is continuous. In fact, we just make the length of
a time slot small enough, as long as the duration of each task is the multiples of
a slot length, our analysis doesn't violate the algorithm setting.

\begin{proof}
 Let $y^j_i (t) = \max \{ p_i^{A j} (t) - p_i^{O j} (t) , 0 \}$ where $p_i^{O j}
(t)$ and $p_i^{A j} (t)$ denote the remaining workload to be processed for task
$\delta^j_i$ in Job $J_i$ at time $t$ under the optimal scheduling policy and
SRPTMS+C algorithm respectively. Let $\psi^o (t)$ and $\psi^o_s (t)$ be the jobs and
tasks that are still alive (have not completed yet) at time $t$ in the optimal scheduling. Further
denote by $\psi^s (t)$ the set of jobs that are alive in SRPTMS+C.

Denote by $t^{s,j}_i$ and $t^{f,j}_i$ the start time and completion time of task $\delta^j_i$ respectively. Based on the constraints \eqref{mapper_processing} and \eqref{reducer_processing}, it follows that
\begin{equation}
\label{average_duration}
\mathbbm{E} \left[ t^{f,j}_i - t^{s,j}_i \right] = \mathbbm{E} \left[ t_i^{j} \right]/s_i(x_i^{j})
\end{equation}
moreover, we have
\begin{equation}
\label{average_processing}
\mathbbm{E} \left[\int_{t^{f,j}_i}^{t^{s,j}_i} d p_i^{A j} (t) \right] = \mathbbm{E} \left[ t_i^{j} \right]
\end{equation}
Substituting  Equation \eqref{average_duration} into Equation \eqref{average_processing} yields
the following formula:
\begin{equation}
\label{average_speed}
\mathbbm{E} \left[\frac{d p_i^{A j} (t)}{dt} \right] = -s_i(x_i^{j})
\end{equation}

The potential function for a single task is defined as follows:
\begin{equation}
  \varphi_i^j (t) = \frac{w_i y_i^j (t)}{s_i (w_i M / \varepsilon W (t))}
\end{equation}
Our overall potential function for all the jobs in the cluster is defined as
\begin{equation}
  \Psi (t) = \frac{1}{\varepsilon^2} \sum_{i \text{} \in \psi^s (t)}
  \sum_{\delta^j_i \in J_i^c (t)} \varphi_i^j (t)
\end{equation}
The Potential funciton is differentiable and it holds that $\Psi (0) = \Psi
(\infty) = 0$ and
\begin{equation}
  \mathbbm{E} \left[ \frac{d \Psi (t)}{\tmop{dt}} \right] =
  \frac{1}{\varepsilon^2} \sum_{i \text{} \in \psi^s (t)} \sum_{\delta^j_i \in
  J_i^c (t)} \mathbbm{E} \left[ \frac{d \varphi_i^j (t)}{\tmop{dt}} \right]
\end{equation}

Let $C_i^A$ be the completion time for Job $J_i$ under SRPTMS+C algorithm.  For any time
$t \geqslant r_i$, let $A_i (t) = w_i (\min \{ C_i^A, t \} - r_i)$ be the
accumulated weighted flow time of Job $J_i$ at time $t$, then we must have
\begin{equation}
  \mathbbm{E} \left[ \frac{\tmop{dA}_i (t)}{\tmop{dt}} \right] = w_{_i} \quad for \quad r_i < t < C_i^A
\end{equation}
$A_i (\infty)$ is just the flowtime  of Job $J_i$ in the cluster. Hence,
the total flowtime of all the jobs in the cluster can be formulated as $A =
\sum_i A_i (\infty)$. Futher let $A_i = A_i (C_i^A)$ and $A(t) = \sum_{i}A_i(t)$. Let $\tmop{OPT}_i (t)$,
$\tmop{OPT}_i$, $\tmop{OPT} (t)$ and OPT be defined similarly for the optimal
scheduling policy.

Similar to the potential-function based analysis \cite{competitive,energy_efficient,scalably-scheduling,SRPT_identical}, our goal is to
bound the continuous and discrete increases to $\Psi (t)$ by a function of
OPT.

We now focus on the the changes made to $\Psi (t)$. It's obvious that the job
arrivals make no change to this metric. In addition, the completion of jobs in
the optimal schedule has no effect on the potential function value. The completion
of jobs in SRPTMS+C causes the corresponding term being removed from $\Psi
(t)$, however, it only decreases the potential and we just omit it as our goal
is to obtain the upper bound for the changes made to $\Psi (t)$. As a result,
we only need to analyze the continuous change to $\Psi (t)$.

\tmfolded{Changes in $\Psi (t)$ due to the optimal scheduling policy which is
define as $\Delta^O (t)$: }{\ }

Let $a_i^{\tmop{Oj}}$ be the number of machines assigned to task $\delta_i^j$
of Job $J_i$ in the optimal scheduling policy. Based on Equation \eqref{average_speed} and the definition of
 potential function, the contribution made by the
optimal scheduling to $\frac{d}{\tmop{dt}} \mathbbm{E} [\varphi_i^j (t)]$ is
bounded by the following formula:
\begin{equation}
  \varepsilon^2 \Delta_i^{\tmop{Oj}} = \frac{w_i s_{_i} (a_i^{\tmop{Oj}})}{s_i
  (w_i M / \varepsilon W (t))}
\end{equation}
There are two categories for $a_{i^{}}^{\tmop{Oj}}$ which are
$a_{i^{}}^{\tmop{Oj}} \leqslant w_i M / \varepsilon W (t)$ and
$a_{i^{}}^{\tmop{Oj}} > w_i M / \varepsilon W (t)$. For the former case,
appling the monotonic property of $s_i$ function for all $i$, we have
$\frac{w_i s_{_i} (a_i^{\tmop{Oj}})}{s_i (w_i M / \varepsilon W (t))}
\leqslant w_i$. For the latter case, applying Proposition 1, we get
\begin{equation}
  \frac{w_i s_{_i} (a_i^{\tmop{Oj}})}{s_i (w_i M / \varepsilon W (t))}
  \leqslant w_i \frac{}{} \frac{a_i^{\tmop{Oj}}}{w_i M / \varepsilon W (t)} =
  \varepsilon W (t) \frac{a_i^{\tmop{Oj}}}{M}
\end{equation}
Combining the two cases, it follows that
\begin{eqnarray}
  \varepsilon^2 \Delta_i^{\tmop{Oj}} (t) & \leqslant & \max \left\{ w_i,
  \varepsilon W (t) \frac{a_i^{\tmop{Oj}}}{M} \right\} \\
   & \leqslant & w_i + \varepsilon W (t) \frac{a_i^{\tmop{Oj}}}{M}
\end{eqnarray}
which indicates
\begin{eqnarray}
  \Delta_{}^O (t) & = & \sum_{i \text{} \in \psi^o (t) \cap \psi^s (t)}
  \sum_{\delta^j_i \in J_i^c (t)} \Delta_i^{\tmop{Oj}} (t) \\
   \label{machine_optimal}
  & \leqslant & \frac{1}{\varepsilon^2} \sum_{i \text{} \in \psi^o (t) \cap
  \psi^s (t)} \sum_{\delta^j_i \in J_i^c (t)} w_i \nonumber\\
  &  & + \frac{W (t)}{M \varepsilon^{}} \sum_{i \text{} \in \psi^s (t)}
  \sum_{\delta^j_i \in J_i^c (t)} a_i^{\tmop{Oj}}
\end{eqnarray}
For the first term of Equation \eqref{machine_optimal}, it follows that $\sum_{\delta^j_i \in
J_i^c (t)} w_i \leqslant \tmop{C w}_i$ where $C$ is the maximum number of copies made for each task in the
optimal scheduling algorithm. Hence,
\begin{eqnarray}
\label{optimal_dynamic}
  \frac{1}{\varepsilon^2} \sum_{i \text{} \in \psi^o (t) \cap \psi^s (t)}
  \sum_{\delta^j_i \in J_i^c (t)} w_i & \leqslant & \frac{C}{\varepsilon^2}
  \sum_{i \text{} \in \psi^o (t) \cap \psi^s (t)} w_i \nonumber\\
  & \leqslant & \frac{C}{\varepsilon^2} \sum_{i \text{} \in \psi^o (t)} w_i
  \nonumber\\
  & = &  \frac{C}{\varepsilon^2} \cdot \mathbbm{E} \left[ \frac{d \tmop{OPT}
  (t)}{\tmop{dt}} \right]
\end{eqnarray}
For the second term in Equation \eqref{machine_optimal}, we have
\begin{equation}
\label{total_number}
  \sum_{i \text{} \in \psi^s (t)} \sum_{\delta^j_i \in J_i^c (t)}
  a_i^{\tmop{Oj}} \leqslant M
\end{equation}
and
\begin{equation}
\label{total_weight}
  W (t) = \sum_{i \text{} \in \psi^s (t)} w_i = \sum_{i \text{} \in \psi^s
  (t)} \mathbbm{E} \left[ \frac{\tmop{dA}_i (t)}{\tmop{dt}} \right]
  =\mathbbm{E} \left[ \frac{d A (t)}{\tmop{dt}} \right]
\end{equation}
Substitute Equation \eqref{optimal_dynamic}, \eqref{total_number} and \eqref{total_weight} into Equation \eqref{machine_optimal}, it yields that
\begin{equation}
  \begin{array}{lll}
    \Delta_{}^O (t) & \leqslant & \frac{C}{\varepsilon^2} \cdot \mathbbm{E}
    \left[ \frac{d \tmop{OPT} (t)}{\tmop{dt}} \right] + \frac{1}{\varepsilon}
    \mathbbm{E} \left[ \frac{d A (t)}{\tmop{dt}} \right]^{}
  \end{array}
\end{equation}
We proceed to analyze the changes to $\Psi (t)$ made by our SRPTMS+C
scheduling.

\tmfolded{Changes in $\Psi (t)$ due to the SRPTMS+C scheduling policy which
is defined as $\Delta_{}^S (t)$:}{\ }

For \ each task that is alive in SRPTMS+C at time $t$, if it completes the
work in the optimal scheduling policy, then $y_i^j (t)$ is positive. Hence,
$y^j_i (t)$ decreases for all tasks $\delta^{^j}_i \nin \psi^o_s (t)$ that
SRPTMS+C processes at time $t$.

We run our algorithm at speed of $1 + \varepsilon$. Let $a^{\tmop{Sj}}_i$ be
the number of machines assigned to task $\delta^j_i$ in SRPTMS+C at time $t$.
According to our scheduling policy, \ we have $\sum_j a_i^{\tmop{Aj}}
\leqslant g_i (t)$ for all $J_i \in \psi^s (t)$. It follows that
\begin{eqnarray}
\label{con_srpt}
  \Delta_{}^S (t) & \leqslant & - \frac{1 + \varepsilon}{\varepsilon^2}
  \sum_{i \text{} \in \psi^s (t)} \sum_{\delta^j_i \in J_i^c (t) \cap
  \delta^{^j}_i \nin \psi^o_s (t)} \frac{w_i s_{_i} (a_i^{\tmop{Sj}})}{s_i
  (w_i M / \varepsilon W (t))} \nonumber\\
  & \leqslant & - \left( \frac{1 + \varepsilon}{\varepsilon^{}} \right) W (t)
  \sum_{i \text{} \in \psi^s (t)} \sum_{\delta^j_i \in J_i^c (t) \cap
  \delta^{^j}_i \nin \psi^o_s (t)} \frac{a_i^{\tmop{Sj}}}{M} \nonumber\\
  & = & \frac{- (1 + \varepsilon) W (t)}{\varepsilon^{}} \left( \sum_{i
  \text{} \in \psi^s (t) \atop \scriptstyle \delta^j_i \in J_i^c (t)} \frac{a_i^{\tmop{Sj}}}{M}
  - \sum_{\delta^{^j}_i \in \psi^o_s (t)} \frac{a_i^{\tmop{Sj}}}{M} \right)
  \nonumber\\
  & = & - \left( \frac{1 + \varepsilon}{\varepsilon^{}} \right) W (t) \sum_{i
  \text{} \in \psi^s (t)} \frac{g_i (t)}{M} \nonumber\\
  &  & + \left( \frac{1 + \varepsilon}{\varepsilon^{}} \right) W (t)
  \sum_{\delta^{^j}_i \in \psi^o_s (t)} \frac{a_i^{\tmop{Sj}}}{M}
\end{eqnarray}
The second inequality in the above follows Proposition \ref{convex_function} and the fact that
$a_i^{\tmop{Sj}} \leqslant g_i (t) = \frac{w_i M}{\varepsilon W (t)}$. To
bound Inequality \eqref{con_srpt}, we need to bound the second term as follows:
\begin{eqnarray}
\label{optimal_srpt}
  \sum_{\delta^{^j}_i \in \psi^o_s (t)} a_i^{\tmop{Sj}} & \leqslant & \sum_{i
  \in \psi^o (t)} \frac{w_i M}{\varepsilon W (t)} \\
  \label{optimal_srpt_1}
  & \leqslant & \frac{M}{\varepsilon W (t)} \mathbbm{E} \left[ \frac{d
  \tmop{OPT} (t)}{\tmop{dt}} \right]
\end{eqnarray}
It addition, we have
\begin{equation}
\label{total_number_machine}
  \sum_{i \text{} \in \psi^s (t)} g_i (t) = M
\end{equation}
Substitute Inequality \eqref{optimal_srpt}, \eqref{optimal_srpt_1} and Equation \eqref{total_number_machine} into Inequality \eqref{con_srpt}, it holds
that
\begin{eqnarray}
  \Delta_{}^S (t) & \leqslant & - \left( \frac{1 +
  \varepsilon}{\varepsilon^{}} \right) W (t) + \left( \frac{1 +
  \varepsilon}{\varepsilon^2} \right) \mathbbm{E} \left[ \frac{d \tmop{OPT}
  (t)}{\tmop{dt}} \right] \nonumber\\
  & = & - \left( \frac{1 + \varepsilon}{\varepsilon^{}} \right) \mathbbm{E}
  \left[ \frac{d A (t)}{\tmop{dt}} \right] \nonumber\\
  &  & +^{} \left( \frac{1 + \varepsilon}{\varepsilon^2} \right) \mathbbm{E}
  \left[ \frac{d \tmop{OPT} (t)}{\tmop{dt}} \right]
\end{eqnarray}
Wr proceed to complete the final analysis based on the results derived
above. Due to the fact that $\int_0^{\infty} \mathbbm{E} \left[ \frac{d \Psi
(t)}{\tmop{dt}} \right] \tmop{dt} =\mathbbm{E} [\nobracket \Psi (\infty)]
-\mathbbm{E} [\nobracket \Psi (0)] = 0,$ we have
\begin{eqnarray}
  \mathbbm{E} [A] & = & \int_0^{\infty} \mathbbm{E} \left[ \frac{\tmop{dA}
  (t)}{\tmop{dt}} \right] \tmop{dt} + \int_0^{\infty} \mathbbm{E} \left[
  \frac{d \Psi (t)}{\tmop{dt}} \right] \tmop{dt} \nonumber\\
  & \leqslant & \int_0^{\infty} \mathbbm{E} \left[ \frac{\tmop{dA}
  (t)}{\tmop{dt}} \right] \tmop{dt} + \int_0^{\infty} (\Delta_{}^O (t) +
  \Delta_{}^S (t)) \tmop{dt} \nonumber\\
  & \leqslant & \int_0^{\infty} \mathbbm{E} \left[ \frac{\tmop{dA}
  (t)}{\tmop{dt}} \right] \tmop{dt} + \int_0^{\infty} \left( -\mathbbm{E} \left[ \frac{\tmop{dA}
  (t)}{\tmop{dt}} \right]  \right) \tmop{dt} \nonumber\\
  &  & + \int_0^{\infty} \left( ^{} \frac{C + 1 + \varepsilon}{\varepsilon^2}
  \mathbbm{E} \left[ \frac{d \tmop{OPT} (t)}{\tmop{dt}} \right] \right)
  \tmop{dt} \nonumber\\
  & = & \left( \frac{C + 1 + \varepsilon}{\varepsilon^2} \right) \mathbbm{E}
  [\tmop{OPT}]
\end{eqnarray}
This completes the proof.
\end{proof}

\end{document}